\theoremstyle{plain}
\newtheorem{thm}{\protect\theoremname}[section]
  \theoremstyle{definition}
  \newtheorem{defn}[thm]{\protect\definitionname}
  \theoremstyle{plain}
  \newtheorem{fact}[thm]{\protect\factname}
  \theoremstyle{plain}
  \newtheorem{lem}[thm]{\protect\lemmaname}
  \theoremstyle{remark}
  \newtheorem{claim}[thm]{\protect\claimname}
\providecommand{\E}{\mathrm{E}}
\newcommand{\e}{\mathrm{e}}
\definecolor{gray-comment}{gray}{0.5}
\theoremstyle{plain}
\newtheorem*{rep@theorem}{\rep@title}
\newcommand{\newreptheorem}[2]{%
\newenvironment{rep#1}[1]{%
 \def\rep@title{#2 \ref{##1}}%
 \begin{rep@theorem}}%
 {\end{rep@theorem}}}
  \providecommand{\claimname}{Claim}
  \providecommand{\definitionname}{Definition}
  \providecommand{\factname}{Fact}
  \providecommand{\lemmaname}{Lemma}
\providecommand{\theoremname}{Theorem}
\begin{document}

\title{Detecting communities is hard\\and counting them is even harder}

\author{Aviad Rubinstein}
\maketitle
\begin{abstract}
We consider the algorithmic problem of community detection in networks.
Given an undirected friendship graph $G=\left(V,E\right)$, a subset
$S\subseteq V$ is an $\left(\alpha,\beta\right)$-community if:
\begin{itemize}
\item Every member of the community is friends with an $\alpha$-fraction
of the community;{\small \par}
\item Every non-member is friends with at most a $\beta$-fraction of the
community.{\small \par}
\end{itemize}
Arora et al \cite{AroraGSS12-communities-QPTAS} gave a quasi-polynomial
time algorithm for enumerating all the $\left(\alpha,\beta\right)$-communities
for any constants $\alpha>\beta$. 

Here, we prove that, assuming the Exponential Time Hypothesis (ETH),
quasi-polynomial time is in fact necessary - and even for a much weaker
approximation desideratum. Namely, distinguishing between:
\begin{itemize}
\item $G$ contains an $\left(1,o\left(1\right)\right)$-community; and{\small \par}
\item $G$ does not contain an $\left(\beta+o\left(1\right),\beta\right)$-community
for any $\beta\in\left[0,1\right]$.{\small \par}
\end{itemize}
We also prove that counting the number of $\left(1,o\left(1\right)\right)$-communities
requires quasi-polynomial time assuming the weaker \#ETH.
\end{abstract}

\section{Introduction}

Identifying communities is a central graph-theoretic problem with
important applications to sociology and marketing (when applied to
social networks), biology and bioinformatics (when applied to protein
interaction networks), and more (see e.g. Fortunato's classic survey
\cite{Fortunato10-survey}). Defining what exactly is a {\em community}
remains an interesting problem on its own (see Arora et al \cite{AroraGSS12-communities-QPTAS}
and Borgs et al \cite{BCMT16-community-coNP-complete} for excellent
treatment from a theoretical perspective). Ultimately, there is no
single ``right'' definition, and the precise meaning of community
should be different for social networks and protein interaction networks. 

In this paper we focus on the algorithmic questions arising from one
of the simplest and most canonical definitions, which has been considered
by several theoretical computer scientists \cite{MishraSST07-a_b-cluster,AroraGSS12-communities-QPTAS,BBBCT13-communities,BKRW15-DkS}
(see Subsection \ref{sub:Related-works} for further discussion):
\begin{defn}
[$(\alpha, \beta)$-Community]

Given an undirected graph $G=\left(V,E\right)$ an $(\alpha,\beta)$-community
is a subset $S\subseteq V$ that satisfies:
\begin{description}
\item [{{Strong ties inside the community}}] For every $v\in S$, $\left|\left\{ v\right\} \times S\right|\cap E\geq\alpha\cdot\left|S\right|$;
and
\item [{{Weak ties to nodes outside the community}}] For every $u\notin S$,
$\left|\left\{ u\right\} \times S\right|\cap E\leq\beta\cdot\left|S\right|$.
\end{description}
\end{defn}
Arora et al \cite[Theorem 3.1]{AroraGSS12-communities-QPTAS} gave
a simple quasi-polynomial ($n^{O\left(\log n\right)}$) time for detecting
$\left(\alpha,\beta\right)$-communities whenever $\alpha-\beta$
is at least some positive constant. The algorithm enumerates over
$O\left(\log n\right)$-tuples of vertices. For each tuple, consider
the set of vertices that are neighbors of an $\left(\alpha+\beta\right)/2$-fraction
of the tuple; test whether this candidate set is indeed a community.

Arora et al's algorithm and analysis are very similar to related algorithms
for approximate Nash equilibrium \cite{LMM03_quasi_poly}, Densest
$k$-Subgraph \cite{Barman15-QPTAS} and Dughmi's Zero-Sum Signaling
problem \cite{CCDEHT15-quasipoly_signaling}. Recently, matching quasi-polynomial
hardness results have been proved for approximate Nash equilibrium
\cite{BKW15-best_nash,BPR15-PCP-PPAD,Rub16-Nash,DFS16-other_objectives},
Densest $k$-Subgraph \cite{BKRW15-DkS,Man16-DkS}, and Zero-Sum Signaling
\cite{Rub15-signaling,BCKS15-signaling} using or inspired by the
technique of ``birthday repetition'' \cite{AIM14-birthday}. A natural
question, made explicit in \cite{BKRW15-DkS}, is whether similar
techniques can be shown to prove quasi-polynomial time hardness, assuming
the Exponential Time Hypothesis (ETH)\footnote{The Exponential Time Hypothesis (ETH) \cite{IPZ01-ETH} asserts that
solving 3SAT requires time $2^{\Omega\left(n\right)}$. Note that
(given our current understanding of complexity) this assumption is
essentially necessary - an \NP-hardness result is very unlikely given
\cite{AroraGSS12-communities-QPTAS}'s quasi-polynomial algorithm.
Recall also that ETH is a significantly weaker assumption than the
related SETH \cite{IP01-SETH,CIP09-SETH2} and NSETH \cite{CGIMPS16-NSETH},}, for $(\alpha,\beta)$-community detection, for any constants $\alpha>\beta\in\left[0,1\right]$. 

Here we show that, for {\em every} constants $\alpha>\beta\in(0,1]$,
community detection requires quasi-polynomial time (assuming ETH).
For example, when $\alpha=1$ and $\beta=0.01$, this means that we
can hide a clique $C$, such that every single vertex not in $C$
is connected to at most 1\% of $C$. Our main result is actually a
much stronger inapproximability: even in the presence of a $\left(1,o\left(1\right)\right)$-community,
finding any $\left(\beta+o\left(1\right),\beta\right)$-community
is hard.
\begin{thm}
\label{thm:decision-1}For every $n$ there exists an $\epsilon=\epsilon\left(n\right)=o\left(1\right)$
such that, assuming ETH, distinguishing between the following requires
time $n^{\tilde{\Omega}\left(\log n\right)}$:
\begin{description}
\item [{Completeness}] $G$ contains an $\left(1,\epsilon\right)$-community;
and
\item [{Soundness}] $G$ does not contain an $\left(\beta+\epsilon,\beta\right)$-community
for any $\beta\in\left[0,1\right]$.
\end{description}
\end{thm}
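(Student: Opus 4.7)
I would adapt the \emph{birthday repetition} framework that yielded quasi-polynomial ETH lower bounds for Densest $k$-Subgraph \cite{BKRW15-DkS,Man16-DkS} and approximate Nash equilibrium \cite{BPR15-PCP-PPAD,Rub16-Nash}. Start from a $2$-prover projection label-cover instance $\mathcal L$ on $m$ variables (the PCP theorem composed with parallel repetition) for which, under ETH, no $2^{o(m)}$-time algorithm distinguishes $\mathrm{val}(\mathcal L)=1$ from $\mathrm{val}(\mathcal L)\le\delta$, $\delta$ a constant. With $k=\tilde\Theta(\sqrt m)$, the reduction will output a graph on $N=2^{\tilde O(\sqrt m)}$ vertices, so that the target $n^{\tilde\Omega(\log n)}$ lower bound corresponds precisely to ETH.

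\textbf{Construction and completeness.}
Vertices are pairs $(S,\alpha)$, where $S$ is a $k$-subset of the variables of $\mathcal L$ and $\alpha$ is an assignment to $S$; put an edge between $(S,\alpha)$ and $(T,\beta)$ iff they agree on $S\cap T$ and jointly satisfy every $\mathcal L$-constraint with both endpoints in $S\cup T$. Given a satisfying $\phi$, the set $C_\phi:=\{(S,\phi|_S)\}$ is by construction a clique, so the internal density is exactly $1$. To upgrade from ``external density bounded by a small constant'' (which the naive check gives) to ``external density $o(1)$'', I would strengthen the edge definition so that each single disagreement between $\gamma$ and $\phi|_T$ propagates into violations across many slots $S$ — concretely by demanding that $(S,\alpha)\sim(T,\beta)$ locally certify a super-constant number of constraints, or by composing $\mathcal L$ with a smooth/low-degree-amplified PCP gadget. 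A birthday/union-bound argument should then show that any external $(T,\gamma)$ with $\gamma\neq\phi|_T$ is disconnected from a $1-o(1)$ fraction of $C_\phi$.

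\textbf{Soundness.}
Suppose a $(\beta+\epsilon,\beta)$-community $C$ exists for some $\beta\in[0,1]$. The strong-ties condition gives every $v\in C$ at least $(\beta+\epsilon)|C|$ satisfying consistency-edges inside $C$; standard free-game decoding of $C$ into a randomized prover strategy for $\mathcal L$ then yields value $>\delta$, contradicting the PCP soundness. The weak-ties condition forecloses the ``shrink to a lucky dense cluster'' cheat: any internally denser sub-community would have to be matched by an external vertex achieving the same inside-density, violating the $\beta$-cap. To handle \emph{every} $\beta\in[0,1]$ simultaneously I would dyadically bucket the candidate density and run the decoding within each bucket, paying only a $\log N$ factor absorbed into the $\epsilon=o(1)$ slack.

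\textbf{Main obstacle.}
The delicate point is exactly the universal-$\beta$ quantifier in the soundness, which has no counterpart in prior birthday-repetition reductions for Densest $k$-Subgraph where both the target size and density are prescribed. Meeting it requires the graph to look locally smooth at \emph{every} density scale, which I would enforce by overlaying carefully calibrated random or PCP-gadget noise on the consistency graph. The tension between (i) enough smoothing to kill every parasitic $(\beta+\epsilon,\beta)$-community and (ii) preservation of $C_\phi$'s clean $(1,o(1))$-structure is what will dictate the precise choice of $k$, the noise density, and the final rate $\epsilon=o(1)$.
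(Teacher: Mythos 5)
There is a genuine gap, in fact two, and they are precisely the two places where the paper does its real work. First, completeness. You correctly notice that the naive birthday-repetition/FGLSS graph only bounds the outside density by a constant, but your primary fix --- strengthening the edge predicate so that each edge ``locally certifies a super-constant number of constraints'' --- does not address the actual failure mode. The problematic outside vertex is one whose assignment differs from the planted assignment $\phi$ on a \emph{single} underlying variable $x_i$: against every community vertex whose tuple avoids $x_i$ and its label-cover neighborhood (a $1-o(1)$ fraction), this vertex satisfies any consistency-plus-constraints predicate, however strong, so it still sees almost the whole community and kills the $(1,\epsilon)$ structure. The only cure is to make distinct assignments far apart as codewords, and this is exactly what the paper does: assignments are encoded as individual-degree-$(\left|{\cal F}\right|-1)$ bivariate polynomials over a field ${\cal G}$ of size $\tilde{O}(\sqrt{n})$, and a vertex carries the restriction of such a polynomial to $t=\polylog(n)$ rows/columns, so two vertices inducing different assignments to a line disagree on all but an $O(\epsilon^{3})$-fraction of it and hence share edges with at most an $\epsilon$-fraction of the planted clique. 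Your ``low-degree-amplified PCP gadget'' gestures in this direction but is left entirely unspecified.

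Second, and more seriously, soundness for every $\beta$. Your decoding step (``free-game decoding of $C$ into a prover strategy'') silently assumes the community's tuples cover essentially all variables; otherwise the decoded object is a partial assignment touching only a vanishing fraction of the constraints, and label-cover soundness gives no contradiction. Nothing in the plain construction prevents a $(\beta+\epsilon,\beta)$-community whose vertices all assign values inside a small set of rows/columns, and dyadic bucketing over candidate densities does not touch this issue; your claim that any parasitic dense cluster ``would have to be matched by an external vertex achieving the same inside-density'' is asserted, not proved. The paper's mechanism is structural: for every half-size $H\subset{\cal G}$ (and $H_A\subset{\cal F}_A$, $H_B\subset{\cal F}_B$) it adds $\left|V\right|^{2}$ identical auxiliary vertices adjacent exactly to the proper vertices whose index set lies inside $H$. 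The duplication forces every community to exclude all auxiliary vertices (Claim \ref{claim:auxiliary}), and then the weak-ties condition applied to $u_H$ with $H\supseteq R$ shows that at most a $\beta$-fraction of any community can be concentrated on any small region $R$; together with the list-decoding bound of $O(1/\epsilon)$ assignments per line (Claim \ref{claim:list-decoding}), this is what rules out $(\beta+\epsilon,\beta)$-communities for all $\beta$ simultaneously. Your proposed ``calibrated random noise overlay'' has no comparable mechanism and no argument that it preserves the $(1,o(1))$ completeness, so as it stands the proposal identifies the right obstacles but does not overcome either of them.
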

Unlike all quasi-polynomial approximation schemes mentioned above,
Arora et al's algorithm has the unique property that it can also {\em exactly
count} all the $\left(\alpha,\beta\right)$-communities. Our second
result is that counting even the number of $\left(1,o\left(1\right)\right)$-communities
requires quasi-polynomial time. A nice feature of this result is that
we can base it on the much weaker \#ETH assumption, which asserts
that counting the satisfying assignment for a 3SAT instance requires
time $2^{\Omega\left(n\right)}$. (Note, for example, that \#ETH is
likely to be true even if $\P=\NP$.)
\begin{thm}
\label{thm:counting-1}For every $n$ there exists an $\epsilon=\epsilon\left(n\right)=o\left(1\right)$
such that, assuming \#ETH, counting $\left(1,\epsilon\right)$-communities
requires time $n^{\log^{1-o\left(1\right)}n}$.
\end{thm}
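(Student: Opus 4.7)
The plan is to mimic the reduction used for Theorem~\ref{thm:decision-1}, but to make every step \emph{parsimonious} so that $(1,\epsilon)$-communities in the output graph are in bijection with satisfying assignments of an input \#3SAT formula. Starting from \#ETH, I would first move to a structured intermediate counting problem whose reductions from \#3SAT are well-known to be parsimonious: for instance, count the satisfying assignments of a 2-CSP of size $N=\tilde{\Theta}(n)$ with constant alphabet and bounded degree, obtained by the standard clause-variable encoding (which preserves the number of solutions exactly). Then I would apply the same birthday-repetition/PCP-style graph construction used for the decision result, engineered so that every $(1,\epsilon)$-community $S$ in the constructed graph $G$ encodes a unique satisfying assignment of the 2-CSP.

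The core technical step is to argue, in the soundness/rigidity direction, that every $(1,\epsilon)$-community $S$ must look exactly like one of the ``canonical'' communities coming from a satisfying assignment --- no ``fractional'' or ``mixed'' communities can exist. Concretely, I would pick the vertex/label gadgets so that (i) each community must contain exactly one vertex per ``coordinate block'' of the birthday repetition (enforced by a clique per block and by the $(1,\epsilon)$ parameters, which force full internal connectivity and near-isolation from outside), and (ii) the consistency constraints of the 2-CSP are enforced by the non-edges of $G$, so that the labels implicit in $S$ must globally agree on a single assignment. The $(1,\epsilon)$ regime is perfectly suited to this: since $\alpha=1$, a community is essentially a clique with an added external separation condition, so the usual ``approximate'' slack that makes counting hard disappears and one can hope for an \emph{exact} correspondence.

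Given such a parsimonious construction, one gets $|\{(1,\epsilon)\text{-communities of }G\}| = \#\mathrm{SAT}(\phi)$, so a hypothetical algorithm counting communities in time $n^{\log^{1-o(1)} n}$ would count satisfying assignments of the original 3SAT formula in time $2^{o(n)}$, contradicting \#ETH once one tracks how the birthday-repetition blow-up ($N \to 2^{O(\sqrt{N\log N})}$ vertices) translates into the final exponent, as in the decision case.

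I expect the main obstacle to be step~(ii): ensuring \emph{rigidity} of communities so that no spurious community arises from combining labels of different satisfying assignments across different blocks (which in decision reductions is harmless but here would multiply the count). The birthday-repetition pattern, which forces each coordinate block to certify consistency with many other coordinate blocks simultaneously, should be what rules out such Frankenstein communities, but making this exactly parsimonious (not just approximately so up to polynomial factors) will require carefully breaking symmetries in the gadgets --- e.g., by adding dummy ``identifier'' vertices inside each block whose pattern of edges to the rest of $G$ is distinct for each label, so that a community's intersection with the block uniquely determines the label it encodes.
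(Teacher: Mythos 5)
Your overall skeleton is genuinely close to the paper's: a parsimonious reduction from \#3SAT to {\sc Label Cover} (the paper's Fact 2.3), an FGLSS-style graph built on a low-degree encoding, a claimed bijection between $\left(1,\epsilon\right)$-communities and satisfying assignments, and the \#ETH time accounting. The gap is that you never supply the one gadget that makes the bijection true, namely a mechanism forcing every $\left(1,\epsilon\right)$-community to cover \emph{all} coordinates, built so that the gadget vertices themselves can never join a community. Your step (i) asserts that the $\left(1,\epsilon\right)$ parameters alone force ``exactly one vertex per coordinate block,'' but nothing in an FGLSS-type graph forces \emph{at least} one vertex per block: a pairwise-consistent clique of vertices covering only some of the lines/blocks, whose partial assignment does \emph{not} extend to any global satisfying assignment, satisfies strong ties trivially, and can also satisfy weak ties, because there need be no single outside vertex consistent with a large fraction of it (the low-degree encoding, which you inherit from the decision construction, makes near-miss vertices far --- here that works against you). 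Such a community corresponds to no satisfying assignment and corrupts the count. The paper handles exactly this with its auxiliary vertices (Claim \ref{claim:structure}): each proper vertex is duplicated $1/\epsilon$ times with an index $i$, and for each line $g$ and index $i$ there is a \emph{pair of identical, non-adjacent} auxiliary vertices $u_{g,i}$ joined to all index-$i$ vertices off line $g$; the identical pair guarantees $u_{g,i}$ is never inside a $\left(1,\epsilon\right)$-community, the $1/\epsilon$ dilution keeps it below the $\epsilon$ threshold for canonical communities, and a pigeonhole over $i$ shows that any community missing line $g$ has some $u_{g,i}$ adjacent to an $\epsilon$-fraction of it --- a contradiction. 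Your proposed fix (``identifier'' vertices breaking symmetry between blocks) targets a non-problem: Frankenstein mixtures of different satisfying assignments are already excluded by the clique requirement plus intersection-consistency plus uniqueness of the individual-degree-$\left(\left|{\cal F}\right|-1\right)$ extension, so no symmetry-breaking is needed there, while the real issue (partial coverage, and duplicates/auxiliaries inside or just outside the community) is left unaddressed.

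A second, related point: routing through ``the same birthday-repetition/PCP-style construction used for the decision result'' is both unnecessary and risky. The decision soundness (Lemma \ref{lem:soundness}) only analyzes instances of value at most $\epsilon^{3}$; for counting you need the exact structure of $\left(1,\epsilon\right)$-communities in the \emph{satisfiable} case, which that analysis never establishes, and the Moshkovitz--Raz step is not parsimonious anyway. The paper's counting reduction is deliberately simpler: no PCP, no gap, no partition into $\rho$-blocks and no balanced subsets --- each vertex encodes one full row/column pair of a bivariate polynomial, the graph has $2^{\tilde{O}\left(\sqrt{n}\right)}$ vertices, and that size alone yields the $n^{\log^{1-o\left(1\right)}n}$ lower bound under \#ETH, much as in your final accounting (which is otherwise fine).
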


\subsection{Related works \label{sub:Related-works}}

The most closely related work is a reduction by Balcan, Borgs, Braverman,
Chayes, and Teng \cite[Theorem 5.3]{BBBCT13-communities} from Planted
Clique to finding $\left(1,1-\gamma\right)$-communities, for some
small (unspecified) constant $\gamma>0$. Note that our inapproximability
in Theorem \ref{thm:decision-1} is much stronger in all parameters;
furthermore, although formally incomparable, our ETH assumption is
preferable over the average-case hardness assumption of Planted Clique.

\subsubsection*{Algorithms for special cases}

Mishra, Schreiber, Stanton, and Tarjan \cite{MishraSST07-a_b-cluster}
gave a polynomial-time algorithm for finding $\left(\alpha,\beta\right)$-communities
that contain a vertex with very few neighbors outside the community.
Balcan et al \cite{BBBCT13-communities} give a polynomial-time algorithm
for enumerating $\left(\alpha,\beta\right)$-communities in the special
case where the degree of every node is $\Omega\left(n\right)$. 

Arora, Ge, Sachdeva, and Schoenebeck \cite{AroraGSS12-communities-QPTAS}
consider several semi-random models where the edges inside the community
are generated at random, according to the expected degree model. (In
fact, their quasi-polynomial time algorithm is also stated in this
setting, but only their ``Gap Assumption'', which is equivalent
to $\alpha-\beta=\Omega\left(1\right)$, is used in the analysis.)

\subsubsection*{Stochastic Block Model}

Variants of the community detection problem on graphs generated by
different stochastic models are extremely popular (see e.g. \cite{BMNN16-block-model_info,CKST16-local_communities,FP16-block-model_bipartite,MMV16-block-model_semi,MPW16-block-model_semi,MX16-block-model,TPGV16-clustering}
for papers in conference proceedings from June 2016). Perhaps the
most influential is the {\em Stochastic Block Model} \cite{HLL83-stochastic_blockmodel}:
The graph is partitioned into two disjoint communities; the edges
within each community are present with probability $\alpha$, independently,
whereas edges between communities are present with probability $\beta$.
Hence this model can also be seen as a special case of the $\left(\alpha,\beta\right)$-Community
Detection problem.

Stochastic models are extremely helpful in physics, for example, because
atoms' interactions obey simple mathematical formulas with high precision.
Unfortunately, for applications such as social networks, existing
models do not describe human behavior with atomic precision, hence
casting a shadow over the applicability of algorithms that work on
ideal stochastic models. Recent works \cite{MPW16-block-model_semi,MMV16-block-model_semi}
attempted to bridge the gap from ideal model to practice by showing
that certain SDP-based algorithms continue to work in a particular
semi-random model where a restricted adversary is allowed to modify
the random input graph. These success stories beg the question of
how strong can one make the adversary? The current paper illuminates
some of the computational barriers.

\subsubsection*{Alternative approaches to modeling communities}

As we mentioned above, there are many different definitions of ``communities''
in networks. For in-depth discussion of different definitions see
Arora et al \cite{AroraGSS12-communities-QPTAS} or Borgs et al \cite{BCMT16-community-coNP-complete}.
As pointed out by the latter, for some definitions even verifying
that a candidate subset is a community is intractable. 

There is also an important literature on axiomatic approaches to the
related problem of clustering (e.g. \cite{Kleinberg02-clustering_Arrow,BA08-clustering_quality,LM14-clustering_axioms});
note that while clustering typically aims to partition a set of nodes,
our main focus is on detecting just a single community; in particular,
different communities may intersect.

\subsection{Overview of proofs}

A good starting point for the technical discussion is a recent subexponential
reduction from 3SAT to the related problem of {\sc Densest-$k$-Subgraph}
\cite{BKRW15-DkS}. In {\sc Densest-$k$-Subgraph}, we seek a subgraph
of size $k$ of maximal density. The two ingredients in \cite{BKRW15-DkS}'s
reduction are ``birthday repetition'' \cite{AIM14-birthday} and
the ``FGLSS graph'' \cite{fglss96}:
\begin{description}
\item [{{``Birthday repetition''}}] Starting with an instance of {\sc Label Cover}
(see definition in Section \ref{sec:Preliminaries}), the reduction
considers a mega-variable for every $\rho$-tuple of variables, for
$\rho\approx\sqrt{n}$. By the birthday paradox, almost every pair
of $\rho$-tuples of variables intersect, inducing a consistency constraint
on the two mega-assignments. Similarly, we expect to see some {\sc Label Cover}
edges in the union of the two $\rho$-tuples, inducing an additional
{\sc Label Cover} constraint between the two mega assignments. Notice
that we have ${n \choose \rho}\approx2^{\sqrt{n}}$ mega variables,
and the alphabet size is also approximately $N=2^{\sqrt{n}}$. Therefore,
assuming ETH, finding an approximately satisfying assignment for the
mega-variables requires time $2^{\Omega\left(n\right)}\approx N^{\log N}$.
\item [{FGLSS}] Similarly to the classic reduction by Feige et al. \cite{fglss96}
for the {\sc Clique} problem, \cite{BKRW15-DkS} construct a vertex
for each mega assignment to each mega variable, and draw an edge between
two vertices if the induced assignments do not violate any consistency
or {\sc Label Cover} constraints. Notice that if the {\sc Label
Cover} instance has a satisfying assignment, then the graph contains
a clique of size ${n \choose \rho}$ where each mega variable receives
the mega assignment induced by the globally satisfying assignment.
On the other hand, any subgraph that corresponds to a consistent assignment
which violates many constraints must be missing most of its edges.
\end{description}
Now this simple reduction is still far from working for the {\sc Community
Detection} problem, and indeed the latter was listed as an open problem
in \cite{BKW15-best_nash}. Below we describe some of the obstacles
and outline how we overcome them.

\subsubsection*{Completeness}

Surprisingly, the main problem with using the same reduction for {\sc Community
Detection} is the completeness: even if the {\sc Label Cover} instance
has a satisfying assignment, the resulting graph has no $\left(\alpha,\beta\right)$-communities,
for any constants $\alpha>\beta$! Observe, in particular, that the
clique that corresponds to the satisfying assignment does not satisfy
the weak ties condition. For any vertex $v$ in that clique, consider
any vertex $v'$ that corresponds to changing the assignment to just
one variable $x_{i}$ in $v$'s assignment. If $v$ agrees with the
assignments of all other vertices in the clique, $v'$ agrees with
almost all of them - except for the negligible fraction that cover
$x_{i}$ or its neighbors in the {\sc Label Cover} graph.

To overcome this problem of vertices that are ``just outside the
community'', we use error correcting codes. Namely, we encode each
assignment as a low-degree bivariate polynomial over finite field
${\cal G}$ of size $\left|{\cal G}\right|\approx\sqrt{n}$. Now vertices
correspond to low-degree assignments to rows/columns of the polynomial.
This guarantees that the assignments induced by every two vertices
are far. If $v$ agrees with all other vertices in the community,
then almost all of those vertices disagree with $v'$.

\subsubsection*{Soundness}

The main challenge for soundness is ruling out communities that do
not correspond to a single, globally consistent assignment to the
{\sc Label Cover} instance. The key idea is to introduce auxiliary
vertices that punish such communities by violating the weak ties desideratum.

Let us begin with the reduction to the counting variant (Theorem \ref{thm:counting-1}),
which is easier, mostly because we are not concerned with approximation
(i.e. we only have to show that subsets that are exactly $\left(1,\epsilon\right)$-communities
correspond to satisfying assignments). Here we further simplify matters
by sketching a construction with weighted edges. The full reduction
(Section \ref{sec:Hardness-of-Counting}) uses unweighted edges and
is only slightly more involved. Consider, for every $g\in{\cal G}$,
an auxiliary vertex that is $\epsilon$-connected to all proper vertices
that do not correspond to assignments to the $g$-th row/column. Now
if a $\left(1,\epsilon\right)$-community $C$ does not contain a
vertex with assignment to the $g$-th row/column, the auxiliary vertex
must simultaneously: (i) belong to $C$ so as not to violate the weak
ties desideratum; yet (ii) it cannot belong to $C$ because all its
edges have weight $\epsilon$ (this would violate the strong ties
desideratum). Therefore every $\left(1,\epsilon\right)$-community
assigns values to every row/column in ${\cal G}^{2}$.

The reduction we described above suffices to show that (assuming ETH)
deciding whether the graph contains a $\left(1,\epsilon\right)$-community
also requires quasi-polynomial time. To get the stronger statement
of Theorem \ref{thm:decision-1} we must rule out even $\left(\beta,\beta+\epsilon\right)$-communities
in case the {\sc Label Cover} instance is far from satisfiable.
In particular, we need to show that subsets that do not correspond
to unique, consistent assignments are never $\left(\beta,\beta+\epsilon\right)$-communities.
Instead of a single column/row, we let each proper vertex correspond
to a subset of $t\approx\log n$ columns/rows. Instead of a single
$g\in{\cal G}$, each auxiliary vertex corresponds to subset $H\subset{\cal G}$
of size $\left|H\right|=\left|{\cal G}\right|/2$. We draw an edge
between an auxiliary vertex and a proper vertex if the indices of
all $t$ columns/rows are contained in $H$; if they are picked randomly
this only happens with polynomially small probability. If, however,
a $\beta$-fraction of the community is restricted to a small subset
$R\subset{\cal G}$, then there are auxiliary vertices for $H\supseteq R$
that connect to all those nodes and violate the weak ties desideratum.
Roughly, we show that at least a $\left(1-\beta\right)$-fraction
of the vertices have assignments that are ``well spread'' over ${\cal G}^{2}$,
and among those assignments there are many violations of the {\sc Label
Cover} constraints.

\section{\label{sec:Preliminaries}Preliminaries}

\subsection*{Label Cover}
\begin{defn}
[{\sc Label Cover}]

{\sc Label Cover} is a maximization problem. The input is a bipartite
graph $G=\left(A,B,E\right)$, alphabets $\Sigma_{A},\Sigma_{B}$,
and a projection $\pi_{e}:\Sigma_{A}\rightarrow\Sigma_{B}$ for every
$e\in E$. 

The output is a labeling $\varphi_{A}:A\rightarrow\Sigma_{A}$, $\varphi_{B}:B\rightarrow\Sigma_{B}$.
Given a labeling, we say that a constraint (or edge) $\left(a,b\right)\in E$
is {\em satisfied} if $\pi_{\left(a,b\right)}\left(\varphi_{A}\left(a\right)\right)=\varphi_{B}\left(b\right)$.
The {\em value of a labeling} is the fraction of $e\in E$ that
are satisfied by the labeling. The value of the instance is the maximum
fraction of constraints satisfied by any assignment.\end{defn}
\begin{thm}
[{Moshkovitz-Raz PCP \cite[Theorem 11]{MR10-2query-PCP}}]\label{thm:label-cover}

For every $n$ and every $\epsilon>0$ (in particular, $\epsilon$
may be a function of $n$), solving {\sc 3SAT} on inputs of size
$n$ can be reduced to distinguishing between the case that a $\left(d_{A},d_{B}\right)$-bi-regular
instance of {\sc Label Cover}, with parameters $\left|A\right|+\left|B\right|=n^{1+o\left(1\right)}\cdot\poly\left(1/\epsilon\right)$,
$\left|\Sigma_{A}\right|=2^{\poly\left(1/\epsilon\right)}$, and $d_{A},d_{B},\left|\Sigma_{B}\right|=\poly\left(1/\epsilon\right)$,
is completely satisfiable, versus the case that it has value at most
$\epsilon$.
\end{thm}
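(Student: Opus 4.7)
The plan is to follow the classical algebraic PCP blueprint, but with parameter choices tailored to simultaneously achieve subconstant soundness $\epsilon$ and near-linear size $n^{1+o(1)}$. Starting from a 3CNF $\phi$ on $n$ variables, I would first arithmetize: pick a finite field $\mathbb{F}$ with $|\mathbb{F}|=\poly(1/\epsilon)$ and a dimension $m$ growing slowly (say $m=\log n/\log\log n$) so that $|\mathbb{F}|^{m}=n^{1+o(1)}$, embed the variable indices into $\mathbb{F}^{m}$, and recast satisfiability of $\phi$ as the existence of a degree-$d$ polynomial $f:\mathbb{F}^{m}\to\mathbb{F}$ (with $d=\poly(1/\epsilon)$) whose restriction to the variable indices is Boolean and which makes a fixed low-degree ``clause polynomial'' vanish identically on a designated subcube. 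This reduces the PCP question to a problem about low-degree polynomials.

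Next I would realize this as a 2-query projection \textsc{Label Cover} via a manifold-vs-point test. The $A$ side is indexed by low-dimensional affine subspaces $\mathcal{M}\subset\mathbb{F}^{m}$, with $\Sigma_{A}$ the set of degree-$d$ polynomials on $\mathcal{M}$, of cardinality $2^{\poly(1/\epsilon)}$. The $B$ side is indexed by points $p\in\mathbb{F}^{m}$ with label in $\mathbb{F}$, so $|\Sigma_{B}|=\poly(1/\epsilon)$. An edge $(\mathcal{M},p)$ exists iff $p\in\mathcal{M}$, and the projection $\pi_{(\mathcal{M},p)}$ simply evaluates the claimed manifold polynomial at $p$. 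Bi-regularity with degrees $\poly(1/\epsilon)$ follows from the symmetry of point-subspace incidences in $\mathbb{F}^{m}$, and the total size is $n^{1+o(1)}\cdot\poly(1/\epsilon)$ by counting manifolds and points. Some bookkeeping is needed to fold in the clause-vanishing check; this is usually done by composing with an inner PCP of proximity so the outer verifier makes only two queries.

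The main obstacle, and the reason this theorem is nontrivial, is the soundness analysis. One must show that if $\phi$ is unsatisfiable then no labeling satisfies more than an $\epsilon$-fraction of edges, and crucially this must be achieved \emph{without} parallel repetition, which would blow the size up to $\poly(n,1/\epsilon)$ rather than $n^{1+o(1)}$. The delicate heart of the argument is a low-degree test with subconstant soundness in the list-decoding regime: any assignment that even slightly agrees with some manifold-labeling should decode to a short list of genuinely low-degree polynomials. Then a separate consistency-plus-clause-check step, via composition with the inner PCP of proximity, turns a list-decoded polynomial into an $\epsilon$-satisfying assignment of $\phi$, contradicting unsatisfiability. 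Calibrating the field size, manifold dimension, degree, alphabet, and soundness error so that every parameter closes up to the claimed bounds at once is the technical core I expect to spend the bulk of the proof on.
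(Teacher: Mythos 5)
This statement is not proved in the paper at all: it is the Moshkovitz--Raz two-query PCP, imported verbatim as a black box (their Theorem 11), and the intended ``proof'' in this context is simply the citation. What you have written is a plan-level sketch of how one might re-derive that theorem, and while it correctly names the classical ingredients (arithmetization, low-degree extension, a manifold-versus-point projection test, composition), it glosses over exactly the two places where the actual result is hard, so as a proof it has genuine gaps rather than omitted routine details.

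Concretely: (i) if the $A$ side consists of \emph{all} low-dimensional affine subspaces of $\mathbb{F}^{m}$, the number of such manifolds is polynomially larger than the number of points, so the instance size is $n^{c}$ for some $c>1$, not $n^{1+o(1)}$; to get almost-linear size one needs a \emph{derandomized} sub-constant-error low-degree test over a sparse, explicitly sampled family of manifolds, which is itself a substantial separate theorem and changes the soundness analysis (you can no longer appeal to the standard Raz--Safra/Arora--Sudan analysis off the shelf). (ii) The step you describe as ``bookkeeping\ldots usually done by composing with an inner PCP of proximity so the outer verifier makes only two queries'' is precisely the breakthrough of Moshkovitz--Raz: standard composition with a PCPP either increases the number of queries, destroys the projection structure, or loses sub-constant error, and their fix required a new composition framework built on locally decode/reject codes. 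Without supplying (i) and (ii), the parameters $\left|A\right|+\left|B\right|=n^{1+o(1)}\cdot\poly\left(1/\epsilon\right)$, $\left|\Sigma_{A}\right|=2^{\poly\left(1/\epsilon\right)}$, $d_{A},d_{B},\left|\Sigma_{B}\right|=\poly\left(1/\epsilon\right)$ with soundness $\epsilon$ do not all close simultaneously. For the purposes of this paper the correct move is to cite the theorem, as the authors do, rather than attempt to reprove it.
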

Counting the number of satisfying assignments is even harder. The
following hardness is well-known, and we sketch its proof only for
completeness:
\begin{fact}
There is a linear-time reduction from \#3SAT to counting the number
of satisfying assignments of a {\sc Label Cover} instance.\end{fact}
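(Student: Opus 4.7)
The plan is to mimic the standard textbook reduction from \textsc{3SAT} to \textsc{Label Cover}, but to track the fact that the reduction is parsimonious (i.e.\ preserves the number of satisfying assignments). Given a \textsc{3SAT} formula $\phi$ with variables $x_1,\ldots,x_n$ and clauses $C_1,\ldots,C_m$, I build a bipartite Label Cover instance as follows. The side $B$ has one vertex for each variable $x_i$, with alphabet $\Sigma_B=\{0,1\}$. The side $A$ has one vertex for each clause $C_j$, with alphabet $\Sigma_A$ consisting of the (at most $7$) assignments to the three variables of $C_j$ that satisfy $C_j$. For each clause $C_j$ with variables $x_{i_1},x_{i_2},x_{i_3}$, I add three edges $(C_j,x_{i_k})$, and set the projection on edge $(C_j,x_{i_k})$ to be the map that reads off the $k$-th coordinate of the clause-assignment. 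Clearly the total description length is $O(n+m)$.

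The key step is to exhibit a bijection between satisfying assignments of $\phi$ and satisfying labelings of the Label Cover instance. Given a labeling $(\varphi_A,\varphi_B)$ that satisfies every constraint, the restriction $\varphi_B$ is a truth assignment, and the constraints of the edges incident to $C_j$ force the clause label $\varphi_A(C_j)\in\Sigma_A$ to equal the triple $(\varphi_B(x_{i_1}),\varphi_B(x_{i_2}),\varphi_B(x_{i_3}))$; since $\varphi_A(C_j)\in\Sigma_A$, this triple satisfies $C_j$, so $\varphi_B$ satisfies $\phi$. Conversely, given any satisfying assignment $\varphi_B$ of $\phi$, for every clause $C_j$ the triple of values that $\varphi_B$ assigns to the three variables of $C_j$ lies in $\Sigma_A$ and is the unique $\varphi_A(C_j)\in\Sigma_A$ that satisfies all three edge constraints around $C_j$. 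So $\varphi_B$ extends to exactly one satisfying labeling of the Label Cover instance.

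Hence the number of satisfying labelings equals the number of satisfying assignments of $\phi$, and the reduction runs in linear time. There is no real obstacle here: the only thing to be careful about is that the projection side of Label Cover goes from $\Sigma_A$ to $\Sigma_B$ (not the other way), which is why I placed clauses on side $A$ and variables on side $B$, making the per-clause label uniquely determined by the variable assignment and ensuring the map is a bijection rather than a many-to-one correspondence.
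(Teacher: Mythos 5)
Your proposal is correct and is essentially the same parsimonious variable/clause reduction as the paper's, with the same bijection argument between satisfying assignments of $\phi$ and satisfying labelings. The only difference is orientation: the paper places variables on side $A$ and clauses on side $B$ with ``consistency'' constraints, whereas you place clauses on $A$ so that each constraint is literally a coordinate projection $\Sigma_A\rightarrow\Sigma_B$ --- which in fact matches the paper's formal definition of {\sc Label Cover} more faithfully.
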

\begin{proof}
Construct a vertex in $A$ for each variable and a vertex in $B$
for each clause. Set $\Sigma_{A}\triangleq\left\{ 0,1\right\} $ and
let $\Sigma_{B}\triangleq\left\{ 0,1\right\} ^{3}\setminus\left(000\right)$
(i.e. $\Sigma_{B}$ is the set of satisfying assignments for a 3SAT
clause, after applying negations). Now if variable $x$ appears in
clause $C$, add a constraint that the assignments to $x$ and $C$
are consistent (taking into account the sign of $x$ in $C$). Notice
that any assignment to $A$: (i) corresponds to a unique assignment
to the 3SAT formula; and (ii) if the 3SAT formula is satisfied, this
assignment uniquely defines a satisfying assignment to $B$. Therefore
there is a one-to-one correspondence between satisfying assignments
to the 3SAT formula and to the instance of {\sc Label Cover}.
\end{proof}

\subsection*{Finding a good partition}
\begin{thm}
[{$k$-wise independence Chernoff bound \cite[Theorem 5.I]{SSS95-chernoff-k-wise}}]\label{thm:chernoff-k-wise}
Let $x_{1}\dots x_{n}\in\left[0,1\right]$ be $k$-wise independent
random variables, and let $\mu\triangleq\E\left[\sum_{i=1}^{n}x_{i}\right]$
and $\delta\leq1$. Then
\[
\Pr\left[\left|\sum_{i=1}^{n}x_{i}-\mu\right|>\delta\mu\right]\leq\e^{-\Omega\left(\min\left\{ k,\delta^{2}\mu\right\} \right)}\mbox{.}
\]

\end{thm}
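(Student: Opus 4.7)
The plan is to use the method of moments: the $k$-th central moment of $X := \sum_i x_i$ depends only on joint distributions of at most $k$ variables, so by $k$-wise independence it equals the $k$-th central moment one would get under full independence. One then applies Markov's inequality to $|Y|^k$, where $Y := X - \mu$.

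Concretely, let $y_i := x_i - \E[x_i]$, and without loss of generality assume $k$ is even (else work with $k-1$). Expand
\[
\E[Y^k] \;=\; \sum_{(i_1,\dots,i_k)\in[n]^k} \E\!\left[\prod_{j=1}^{k} y_{i_j}\right].
\]
Each summand involves at most $k$ distinct indices, so by $k$-wise independence the expectation factors over the distinct indices. Since each $y_i$ is centered, only multi-indices in which every distinct value appears at least twice can contribute. Combined with $|y_i|\le 1$ and $\E[y_i^2]\le\E[x_i]$ (which implies $\E[y_i^s]\le\E[x_i]$ for every $s\ge 2$), a standard combinatorial accounting over these ``pair-covered'' patterns (counting partitions of $\{1,\dots,k\}$ into blocks of size $\ge 2$ and assigning distinct indices to blocks) yields
\[
\E[Y^k] \;\le\; (C\, k\, \mu)^{k/2}
\]
for an absolute constant $C$.

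Applying Markov's inequality,
\[
\Pr[\,|Y|>\delta\mu\,] \;\le\; \frac{\E[Y^k]}{(\delta\mu)^k} \;\le\; \left(\frac{C k}{\delta^2\mu}\right)^{k/2}.
\]
If $k\le \delta^2\mu/(2C)$, the right-hand side is at most $2^{-k/2} = \e^{-\Omega(k)}$. Otherwise, we exploit the fact that $k$-wise independence implies $k'$-wise independence for every $k'\le k$, and rerun the argument with $k' := \lfloor \delta^2\mu/(2C)\rfloor$ (rounded to an even integer), obtaining $\e^{-\Omega(\delta^2\mu)}$. Combining the two cases gives the desired $\e^{-\Omega(\min\{k,\,\delta^2\mu\})}$.

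The technical heart, and main obstacle, is the moment bound $\E[Y^k]\le (Ck\mu)^{k/2}$: one must show that the sum over pair-covered multi-indices, weighted by products of single-variable moments, collapses to the Gaussian-like scaling $(k\mu)^{k/2}$ (analogous to the Gaussian moment $(k-1)!!\,\sigma^k$). The key leverage is the inequality $\E[y_i^s]\le\E[x_i]$ for $s\ge 2$, which lets every extra power of $y_i$ beyond the second be absorbed for free, so the contribution of each partition is controlled by $\mu^{m}$ where $m$ is the number of blocks, and $m\le k/2$. Once this moment estimate is established, everything else is Markov plus the two-case split above.
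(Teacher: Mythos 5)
The paper itself gives no proof of this statement: it is imported as a black box from Schmidt--Siegel--Srinivasan \cite[Theorem 5.I]{SSS95-chernoff-k-wise}, so there is no internal argument to compare against. Your proposal is the standard moment-method proof of exactly this kind of bound (the route taken in that literature, e.g.\ SSS95 and Bellare--Rompel): expand the $k$-th central moment, factor each term using $k$-wise independence, discard patterns containing a singleton block by centering, bound each surviving partition's contribution via $\E\left[|y_i|^{s}\right]\leq\E\left[x_i\right]$, and finish with Markov plus a drop to a lower independence level $k'\approx\delta^{2}\mu$. The skeleton is sound.

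One assertion, however, is stated too strongly and deserves a concrete fix: the unconditional claim $\E\left[Y^{k}\right]\leq\left(Ck\mu\right)^{k/2}$ is false when $k\gg\mu$. For instance, a single Bernoulli$(\mu)$ variable has $k$-th central moment roughly $\mu$, which exceeds $\left(Ck\mu\right)^{k/2}$ as soon as $Ck\mu\leq1/2$ and $k\gtrsim\log\left(1/\mu\right)$. Carrying out your own accounting --- at most $m^{k}/m!$ partitions of $\left[k\right]$ into $m$ blocks of size at least two, each contributing at most $\mu^{m}$ --- gives $\E\left[Y^{k}\right]\leq\sum_{m\leq k/2}\mu^{m}m^{k}/m!$, and this collapses to $\left(Ck\mu\right)^{k/2}$ only in the regime $\mu\geq k$, where the sum is dominated by the perfect-matching term $m=k/2$; the correct general form is $\E\left[Y^{k}\right]\leq\left(Ck\left(\mu+k\right)\right)^{k/2}$. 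Your argument survives because of the case split: since $\delta\leq1$, every invocation of the moment bound uses an exponent at most $\delta^{2}\mu/\left(2C\right)\leq\mu$, which is precisely where the simplified form is valid. So either state the bound with the additive $k^{2}$ term, or restrict it explicitly to $k\leq\mu$ before applying it (and, cosmetically, work with $\left|\E\left[y_i^{s}\right]\right|\leq\E\left[\left|y_i\right|^{s}\right]$ for odd $s$). With those repairs the proof is complete and matches the standard one.
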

We use Chernoff bound with $\Theta\left(\log n\right)$-wise independent
variables to deterministically partition variables into subsets of
cardinality $\approx\sqrt{n}$. Our (somewhat naive) deterministic
algorithm for finding a good partition takes quasi-polynomial time
($n^{O\left(\log n\right)}$), which is negligible with respect to
the sub-exponential size ($N=2^{\tilde{O}\left(\sqrt{n}\right)}$)
of our reduction\footnote{Do not confuse this with the quasi-polynomial lower bound ($N^{\tilde{O}\left(\log N\right)}$)
we obtain for the running time of the community detection problem.}.
\begin{lem}
\label{lem:partition}Let $G=\left(A,B,E\right)$ be a bipartite $\left(d_{A},d_{B}\right)$-bi-regular
graph, and let $n_{A}\triangleq\left|A\right|$, $n_{B}\triangleq\left|B\right|$;
set also $n\triangleq n_{B}+n_{A}$ and $\rho\triangleq\sqrt{n}\log n$.
Let $T_{1},\dots,T_{n_{B}/\rho}$ be an arbitrary partition of $B$
into disjoint subsets of size $\rho$. There is a quasi-polynomial
deterministic algorithm (alternatively, linear-time randomized algorithm)
that finds a partition of $A$ into $S_{1},\dots,S_{n_{A}/\rho}$,
such that:
\begin{equation}
\forall i\,\,\,\bigg|\left|S_{i}\right|-\rho\bigg|<\rho/2,\label{eq:S_i}
\end{equation}
and
\begin{equation}
\forall i,j\,\,\,\Bigg|\left|\left(S_{i}\times T_{j}\right)\cap E\right|-\frac{d_{A}\rho^{2}}{n_{B}}\Bigg|<\frac{d_{A}\rho^{2}}{2n_{B}}\mbox{.}\label{eq:S_iXT_j}
\end{equation}
\end{lem}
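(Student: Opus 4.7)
The plan is to choose the partition by a $k$-wise independent hash function $h : A \to \{1,\dots,n_A/\rho\}$ with $k = \Theta(\log n)$ and set $S_i \triangleq h^{-1}(i)$. The randomized version is immediate, and the deterministic version follows by enumerating a $k$-wise independent family of hash functions of size $n^{O(k)} = n^{O(\log n)}$.

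First, I would verify the size constraint \eqref{eq:S_i}. Writing $|S_i| = \sum_{a \in A} \mathbf{1}[h(a) = i]$, the indicators are $k$-wise independent $[0,1]$-valued variables with total expectation $\mu = \rho = \sqrt{n}\log n$. Applying Theorem~\ref{thm:chernoff-k-wise} with $\delta = 1/2$ gives failure probability $e^{-\Omega(\min(k,\rho))} = e^{-\Omega(\log n)}$ once $k \ge C\log n$. A union bound over the $n_A/\rho \le n$ buckets loses only a factor of $n$, which is absorbed by taking $C$ large.

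Next, for the edge constraint \eqref{eq:S_iXT_j}, I would fix a pair $(i,j)$ and let $c_a \triangleq |(\{a\} \times T_j) \cap E|$, so $c_a \in \{0,\dots,d_A\}$ and $|(S_i \times T_j) \cap E| = \sum_{a \in A} c_a \mathbf{1}[h(a) = i]$. Rescaling by $d_A$ gives $[0,1]$-valued variables $z_a = (c_a/d_A)\mathbf{1}[h(a)=i]$, still $k$-wise independent in $h$. Using $\sum_a c_a = d_B \rho$ and biregularity $d_A n_A = d_B n_B$, their total expectation equals $\mu' = \rho^2/n_B \ge \log^2 n$. Theorem~\ref{thm:chernoff-k-wise} with $\delta = 1/2$ yields failure $e^{-\Omega(\min(k,\log^2 n))} = e^{-\Omega(k)}$ per pair. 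A union bound over $(n_A/\rho)(n_B/\rho) \le n^2$ pairs is again handled by choosing $k = C\log n$ with $C$ large.

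Combining the two union bounds shows a good $h$ exists with positive probability, which handles both the randomized claim and, together with a standard $k$-wise independent polynomial-hashing construction over a prime field of size $\poly(n)$, the deterministic quasi-polynomial claim by brute-force enumeration and verification. The only mildly delicate point is the edge constraint: because $c_a$ can be as large as $d_A$, the naive normalization shrinks the effective mean by a factor of $d_A$, so one must check that $\mu' = \rho^2/n_B$ is still comfortably $\omega(\log n)$; the choice $\rho = \sqrt{n}\log n$ was calibrated precisely so that this holds (giving $\mu' \ge \log^2 n$) regardless of $d_A$, which is the one place the $\log n$ factor inside $\rho$ is needed.
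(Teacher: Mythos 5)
Your proposal is correct and follows essentially the same route as the paper: a random placement of each $a\in A$ into one of the $n_A/\rho$ parts, derandomized by enumerating a $\Theta(\log n)$-wise independent sample space of size $n^{O(\log n)}$ and applying Theorem \ref{thm:chernoff-k-wise} with a union bound. Your write-up in fact supplies details the paper leaves implicit, notably the normalization of the edge counts by $d_A$ and the check that $\rho^2/n_B\geq\log^2 n$ keeps the mean large enough.
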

\begin{proof}
Suppose that we place each $a\in A$ into a uniformly random $S_{i}$.
By Chernoff bound and union bound, (\ref{eq:S_i}) and (\ref{eq:S_iXT_j})
hold with high probability. Now, by Chernoff Bound for $k$-wise independent
variables (Theorem \ref{thm:chernoff-k-wise}), it suffices to partition
$A$ using a $\mbox{\ensuremath{\Theta\left(\log n\right)}}$-wise
independent distribution. Such distribution can be generated with
a sample space of $n^{O\left(\log n\right)}$ (e.g. \cite{ABI86-k-wise-construction}).
Therefore, we can enumerate over all possibilities in quasi-polynomial
time. By the probabilistic argument, we will find at least one partition
that satisfies (\ref{eq:S_i}) and (\ref{eq:S_iXT_j}).
\end{proof}

\section{Hardness of Counting Communities \label{sec:Hardness-of-Counting}}
\begin{thm}
\label{thm:counting}There exists an $\epsilon\left(n\right)=o\left(1\right)$
such that, assuming \#ETH, counting $\left(1,\epsilon\right)$-communities
requires time $n^{\log^{1-o\left(1\right)}n}$.
\end{thm}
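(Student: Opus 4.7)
The plan is to reduce \#{\sc Label Cover} (shown \#ETH-hard in the preceding Fact) to counting $(1,\epsilon)$-communities. Starting from a \#3SAT instance with $n$ variables, we first obtain a \#{\sc Label Cover} instance, then apply birthday repetition via Lemma \ref{lem:partition}: partition $A$ and $B$ into blocks of size $\rho \approx \sqrt{n}\log n$, and identify the blocks with the coordinates of a grid ${\cal G}^2$, where $|{\cal G}| \approx \sqrt{n/\rho}$. This sets the parameter scale for the reduction; the resulting graph will have size $N = 2^{\tilde{O}(\sqrt{n})}$, so a bijection between satisfying assignments and $(1,\epsilon)$-communities translates the $2^{\Omega(n)}$ \#ETH bound into the claimed $N^{\log^{1-o(1)} N}$ counting lower bound.

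The graph has two kinds of vertices. The \emph{proper} vertices correspond to pairs consisting of either a row index or a column index in ${\cal G}$, together with a low-degree univariate polynomial assignment to that row or column. Two proper vertices are adjacent iff their induced assignments are mutually consistent wherever they overlap, and together violate no {\sc Label Cover} constraint lying inside their combined support. Consequently, any globally satisfying bivariate low-degree polynomial $P$ yields a clique of $2|{\cal G}|$ proper vertices, namely the restrictions of $P$ to each row and each column; the error-correcting distance of low-degree codewords guarantees that distinct satisfying $P$'s give disjoint cliques and, crucially, that no additional proper vertex can join such a clique while keeping all its ties at strength $\alpha = 1$. For every $g \in {\cal G}$ I then add an \emph{auxiliary} vertex (or, more precisely, an unweighted gadget simulating the weighted construction sketched in the introduction) connected to exactly those proper vertices whose index is not $g$; if a candidate community omits row or column $g$ entirely, this auxiliary vertex is adjacent to essentially all non-members (violating the weak-ties bound if placed outside $C$) yet has a vanishing fraction of ties to members (violating the strong-ties bound at $\alpha = 1$ if placed inside $C$), ruling the candidate out. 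Hence every $(1,\epsilon)$-community must select exactly one proper vertex per row and per column, and the consistency edges force these choices to stitch together into a single bivariate polynomial satisfying every {\sc Label Cover} constraint.

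The main obstacle is converting the weighted auxiliary gadget into plain unweighted edges while preserving an \emph{exact} counting bijection. For a decision statement one is free to collapse or duplicate auxiliary structure, but for counting we must ensure that each satisfying {\sc Label Cover} assignment corresponds to exactly one $(1,\epsilon)$-community, with no extras. The plan is to blow up each auxiliary vertex into a carefully-sized bundle of unweighted vertices whose aggregate neighborhood into the proper vertices approximates the intended $\epsilon$-weight, and to design the bundle so that its internal edges unambiguously force every auxiliary vertex to sit outside every community. Once this is set up, the argument of the previous paragraph goes through verbatim, giving a bijection between satisfying assignments of the {\sc Label Cover} instance and $(1,\epsilon)$-communities of the constructed graph, with $\epsilon = \epsilon(N) = o(1)$ chosen to absorb the small slack introduced by the unweighted simulation.
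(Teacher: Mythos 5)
Your proposal reproduces the high-level architecture of the paper's reduction (low-degree encoding of assignments, line/column vertices, auxiliary vertices that punish communities missing a line), but the two steps that actually carry the proof are missing or would fail as described. First, the step you yourself flag as the main obstacle---replacing the weighted auxiliary gadget by unweighted edges while keeping an \emph{exact} bijection---is precisely the step you leave as ``the plan is to\dots'', and the plan as stated does not work: the weak-ties condition constrains the fraction of community members adjacent to each \emph{individual} auxiliary vertex, so blowing an auxiliary vertex up into a bundle does nothing to shrink that fraction (each copy is still adjacent to essentially every proper vertex off line $g$). The paper's mechanism is the reverse: it replicates each \emph{proper} vertex $1/\epsilon$ times, indexed by $i\in\left[1/\epsilon\right]$, and connects the auxiliary vertex $u_{g,i}$ only to the $i$-th replica of each proper vertex off line $g$; since replicas are indistinguishable, every community contains all or none of them, so in a legitimate community each $u_{g,i}$ sees strictly less than an $\epsilon$-fraction, while in a community omitting line $g$ some $u_{g,i}$ sees at least an $\epsilon$-fraction. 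Auxiliary vertices are then kept out of every community by introducing two identical, non-adjacent copies of each $u_{g,i}$. Without some such mechanism you get neither completeness nor the exactness of the count.

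Second, your proper vertices are single rows or columns, and with that design completeness already fails: take the clique of all restrictions of a satisfying low-degree polynomial $P$; a non-member vertex carrying row $g$ with a wrong polynomial overlaps no other row, so (absent an accidental {\sc Label Cover} violation) it is adjacent to essentially all row-vertices of the clique, i.e.\ to about half the community---far more than $\epsilon$---so the intended set is not a $\left(1,\epsilon\right)$-community at all. Your appeal to code distance only addresses whether an outsider can keep \emph{all} its ties ($\alpha=1$), not the weak-ties bound $\beta=\epsilon$ that every outsider must satisfy. The paper's fix is to make each proper vertex a ``cross'': a point $g$ together with two degree-$\left(\left|{\cal F}\right|-1\right)$ polynomials $p_{1},p_{2}$ with $p_{1}\left(g\right)=p_{2}\left(g\right)$ assigning the whole row and column through $g$, so any two vertices overlap and an outsider agrees with at most an $\left|{\cal F}\right|/\left|{\cal G}\right|<\epsilon$ fraction of the community. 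Finally, the counting reduction needs no birthday partition (Lemma \ref{lem:partition} is used only for the decision theorem): variables are identified directly with points of ${\cal F}^{2}$, $\left|{\cal F}\right|=\sqrt{n}$, inside a slightly larger field ${\cal G}$. Your parameters (blocks of size $\rho\approx\sqrt{n}\log n$ placed at grid points of a field of size $\approx\sqrt{n/\rho}$) cannot encode a block assignment in a single field element, and once repaired the graph size is no longer $2^{\tilde{O}\left(\sqrt{n}\right)}$, so the claimed $n^{\log^{1-o\left(1\right)}n}$ lower bound would be lost.
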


\subsubsection*{Construction}

Begin with an instance $\left(A,B,E,\pi\right)$ of {\sc Label Cover}
of size $n=n_{A}+n_{B}$ where $n_{A}\triangleq\left|A\right|$ and
$n_{B}\triangleq\left|B\right|$. Let ${\cal G}$ be a finite field
of size $\sqrt{n}/\epsilon^{3}$, and let ${\cal F}\subset{\cal G}$
be an arbitrary subset of size $\left|{\cal F}\right|=\sqrt{n}$.
We identify between $A\cup B$ and points in ${\cal F}^{2}$; we also
identify between a subset of ${\cal G}$ and $\Sigma_{A}\cup\Sigma_{B}$.
Thus there is a one-to-one correspondence between a subset of assignments
to $P_{{\cal F}}\colon{\cal F}^{2}\rightarrow{\cal G}$ and assignments
to the {\sc Label Cover} instance. We can extend any such $P_{{\cal F}}$
to an individual-degree-$\left(\left|{\cal F}\right|-1\right)$ polynomial
$P:{\cal G}^{2}\rightarrow{\cal G}$. In the other direction, we think
of each low individual degrees polynomial $P:{\cal G}^{2}\rightarrow{\cal G}$
as a (possibly invalid) assignment to the {\sc Label Cover} instance.

For every $g\in{\cal G}$, and degree-$\left(\left|{\cal F}\right|-1\right)$
polynomials $p_{1},p_{2}:{\cal G}\rightarrow{\cal G}$ such that $p_{1}\left(g\right)=p_{2}\left(g\right)$,
we construct $1/\epsilon$ vertices $\left\{ v_{g,p_{1},p_{2},i}\right\} _{i=1}^{1/\epsilon}\subset V$
in the communities graph. Each vertex naturally induces an assignment
($p_{1},p_{2}$) on $\left({\cal G}\times\left\{ g\right\} \right)\cup\left(\left\{ g\right\} \times{\cal G}\right)$.
We draw an edge between two vertices in $V$ if they agree on the
intersection of their lines, and if their induced assignments satisfy
all the {\sc Label Cover} constraints. 

For every $g\in{\cal G}$ and $i\in\left[1/\epsilon\right]$, we also
add two identical auxiliary vertices $u_{g,i}$ which are connected
to every $v_{g',p_{1},p_{2},i}$ for $g'\neq g$ (but not to each
other).

\subsubsection*{Completeness}

For each assignment to the {\sc Label Cover} instance, we construct
a $\left(1,\epsilon\right)$-community by taking the induced assignment
$P_{{\cal F}}\colon{\cal F}^{2}\rightarrow{\cal G}$ and extending
it to an individual-degree-$\left(\left|{\cal F}\right|-1\right)$
polynomial $P:{\cal G}^{2}\rightarrow{\cal G}$. Let $C$ be all the
vertices $v_{g,p_{1},p_{2},i}$ such that $p_{1},p_{2}$ are the restrictions
of $P$ to $\left({\cal G}\times\left\{ g\right\} \right),\left(\left\{ g\right\} \times{\cal G}\right)$.
This correspondence is one-to-one and we need to show that the resulting
$C$ is actually a $\left(1,\epsilon\right)$-community. 

Because all the vertices correspond to a consistent satisfying assignment,
$C$ is a clique. Let $v_{g,q_{1},q_{2},i}\notin C$; wlog $q_{1}$
disagrees with the restriction of $P$ to $\left({\cal G}\times\left\{ g\right\} \right)$.
Since both $q_{1}$ and the restriction of $P$ are degree-$\left(\left|{\cal F}\right|-1\right)$
polynomials, they must disagree on all but at most $\left(\left|{\cal F}\right|-1\right)$
elements of ${\cal G}$. For all other $h\in{\cal G}$, the vertex
$v_{g,q_{1},q_{2},i}$ does not share edges with any $v_{h,p_{1},p_{2},j}\in C$.
Therefore, $v_{g,q_{1},q_{2},i}$ has edges to less than an $\left(\left|{\cal F}\right|/\left|{\cal G}\right|\right)$-fraction
of vertices in $C$. Finally, every auxiliary vertex $u_{g,i}$ has
edges to a $\frac{\left|{\cal G}\right|-1}{\left|{\cal G}\right|}\cdot\epsilon<\epsilon$-fraction
of the vertices in $\epsilon$. Therefore, $C$ is a $\left(1,\epsilon\right)$-community.

\subsection{Soundness}

\subsubsection*{Structure of $\left(1,\epsilon\right)$-communities}
\begin{claim}
\label{claim:structure}Every $\left(1,\epsilon\right)$-community
$C$ contains exactly $1/\epsilon$ vertices $\left\{ v_{g,p_{1},p_{2},i}\right\} _{i=1}^{1/\epsilon}$
for each $g$.\end{claim}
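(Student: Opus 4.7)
The plan is to establish the claim in four structural steps. Throughout, I treat $C$ as a clique (as forced by strong ties with $\alpha=1$).

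\emph{Step 1: No auxiliary vertex lies in $C$.} Suppose $u_{g,i}\in C$. Since $u_{g,i}$ is non-adjacent to its twin and to every other auxiliary vertex, the clique condition forces every other member of $C$ to be a proper vertex of index $i$ with row/column $\ne g$. The twin copy of $u_{g,i}$, necessarily outside $C$, is then adjacent to all $|C|-1$ of these proper vertices, so weak ties yields $|C|-1\le\epsilon|C|$, i.e., $|C|\le 1/(1-\epsilon)$. For $\epsilon=o(1)$ this pins $|C|$ to a trivially small constant, which is ruled out by checking weak ties on any proper neighbor of the remaining vertex(es).

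\emph{Steps 2--3: For each covered $g$, all proper vertices of $C$ with first coordinate $g$ share one polynomial pair $(P_g^{(1)},P_g^{(2)})$, and all $1/\epsilon$ indices $i$ appear.} Any two proper vertices of $C$ with the same first coordinate $g$ have identical pairs of lines, so the edge condition (agreement on the intersection of lines) forces the polynomial pairs to coincide, giving Step 2. For Step 3, a missing copy $v_{g,P_g^{(1)},P_g^{(2)},j}$ placed outside $C$ would be adjacent to every vertex of $C$ -- it carries exactly the same polynomials and therefore inherits both polynomial-agreement and Label Cover compatibility from the member(s) of $C$ with row/column $g$ -- so it would have $|C|$ neighbors in $C$, contradicting weak ties once $\epsilon<1$.

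\emph{Step 4 (the main obstacle): Every $g\in\mathcal{G}$ is covered.} Let $S\subseteq\mathcal{G}$ be the covered set and suppose, for contradiction, $g^*\notin S$. Using the pairwise compatibilities $P_g^{(1)}(g')=P_{g'}^{(2)}(g)$ for $g,g'\in S$ together with the individual-degree-$(|\mathcal{F}|-1)$ structure, I would extend to a global polynomial $P:\mathcal{G}^2\to\mathcal{G}$ of individual degree $\le|\mathcal{F}|-1$ satisfying $P(\cdot,g)=P_g^{(1)}$ and $P(g,\cdot)=P_g^{(2)}$ for every $g\in S$: when $|S|\ge|\mathcal{F}|$, Lagrange-interpolate along any $|\mathcal{F}|$-subset $S_0\subseteq S$ and invoke the standard ``two degree-$\le|\mathcal{F}|-1$ polynomials agreeing on $|\mathcal{F}|$ points are identical'' principle to verify the interpolation matches $P_g^{(1)}, P_g^{(2)}$ on all of $S\setminus S_0$; the case $|S|<|\mathcal{F}|$ admits non-unique extensions and one picks any. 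Then $(q_1,q_2):=(P(\cdot,g^*),P(g^*,\cdot))$ defines a proper vertex $v_{g^*,q_1,q_2,1}\notin C$ that is polynomially compatible with every $v_g\in C$ on the intersection of their lines. To conclude that $v_{g^*,q_1,q_2,1}$ is genuinely adjacent to every $v_g\in C$ in the community graph I must verify the Label Cover constraints on edges between $g^*$'s lines and $g$'s lines: if $g^*\in\mathcal{G}\setminus\mathcal{F}$ these are vacuous since $g^*$'s lines contain no Label Cover variables, and if $g^*\in\mathcal{F}$ I would argue that $P|_{\mathcal{F}^2}$ is the unique low-degree extension of the constraint-satisfying partial assignment already certified by $C$'s clique structure, so the relevant local constraints continue to hold. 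Either way $v_{g^*,q_1,q_2,1}$ is adjacent to every member of $C$, and weak ties forces $|C|\le\epsilon|C|$, the desired contradiction. The hardest part is the $g^*\in\mathcal{F}$ sub-case, where one must carefully propagate Label Cover satisfaction through the polynomial extension.
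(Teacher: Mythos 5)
Your Steps 1--3 are essentially the paper's argument (auxiliary twins cannot coexist in a clique, copies differing only in the index $i$ are interchangeable, and the clique condition forces a single polynomial pair per covered $g$), and they are fine. The problem is Step 4, which is exactly the step the auxiliary vertices were designed for, and your replacement argument has a genuine gap in the sub-case you yourself flag as hardest. If the uncovered element satisfies $g^*\in\mathcal{F}$ (e.g., the covered set is $\mathcal{G}\setminus\{g^*\}$), then the lines $\left(\{g^*\}\times\mathcal{G}\right)\cup\left(\mathcal{G}\times\{g^*\}\right)$ do contain Label Cover variables, namely the points of $\mathcal{F}^2$ with a coordinate equal to $g^*$. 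The clique structure of $C$ certifies the constraints only among variables lying on \emph{covered} lines; no vertex of $C$ assigns anything to the variables on $g^*$'s lines, so nothing certifies the constraints touching them. The values that your interpolated polynomial $P$ takes at those points are indeed forced (uniqueness of the low-degree extension when $|S|\ge|\mathcal{F}|$), but uniqueness says nothing about constraint satisfaction: the forced values need not satisfy the projection constraints with the assignments of $C$'s vertices, and need not even lie in the subset of $\mathcal{G}$ identified with $\Sigma_A\cup\Sigma_B$. Hence the candidate witness $v_{g^*,q_1,q_2,1}$ may fail the edge condition with many (even all) vertices of $C$, and no weak-ties violation follows. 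This is not a presentational issue: when the Label Cover instance is far from satisfiable, a community's partial assignment genuinely may admit no constraint-respecting completion on the missing line, so a proof of coverage cannot go through an adjacent proper-vertex witness.

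The paper closes this step without any interpolation, using the auxiliary vertices you only invoked in Step 1: if no vertex of $C$ covers $g^*$, then (since by Step 1 all of $C$ is proper, with first coordinate $\ne g^*$) every vertex of $C$ is adjacent to $u_{g^*,i}$ for its own index $i$; pigeonholing over the $1/\epsilon$ indices, some $u_{g^*,i}\notin C$ is adjacent to at least an $\epsilon$-fraction of $C$, contradicting weak ties. Replacing your Step 4 with this argument (your low-degree extension is then unnecessary, as is the case split on whether $g^*\in\mathcal{F}$) yields a complete proof; as written, the proposal does not establish the claim.
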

\begin{proof}
First, observe that $C$ cannot contain any auxiliary vertices: if
$C$ contains one copy of $u_{g,i}$, it must also contain the other;
but they don't have an edge between them, so they cannot both belong
to a $\left(1,\epsilon\right)$-community.

Now, assume by contradiction that for some $g\in{\cal G}$, $C$ does
not contain any vertices with assignments for $\left({\cal G}\times\left\{ g\right\} \right)\cup\left(\left\{ g\right\} \times{\cal G}\right)$.
Then every vertex in $C$ is connected to (both copies of) $u_{g,i}$,
for some $i\in\left[1/\epsilon\right]$. Therefore there is at least
one $i\in\left[1/\epsilon\right]$ such that $u_{g,i}$ is connected
to an $\epsilon$-fraction of the vertices in $C$. But this is a
contradiction since $u_{g,i}\notin C$.

If we ignore the auxiliary vertices (which, as we argued, $C$ does
not contain), the different vertices $v_{g,p_{1},p_{2},i}$ that correspond
to the same assignment to the same lines (i.e. if we only change $i$)
are indistinguishable. Therefore if $C$ contains one of them, it
must contain all of them (hence, at least $1/\epsilon$ vertices for
each $g$).

Finally, since $C$ is a clique, it cannot contain vertices that disagree
on any assignments. (In particular, it cannot contain more than $1/\epsilon$
vertices for each $g$.)
\end{proof}

\subsubsection*{Completing the proof}
\begin{proof}
[Proof of Soundness] By Claim \ref{claim:structure}, every $\left(1,\epsilon\right)$-community
$C$ contains exactly $1/\epsilon$ vertices $\left\{ v_{g,p_{1},p_{2},i}\right\} _{i=1}^{1/\epsilon}$
for each $g$. Furthermore, since $C$ is a clique, all the induced
assignments agree on all the intersections. So every $\left(1,\epsilon\right)$-community
corresponds to a unique consistent assignment to the {\sc Label Cover}
instance. Finally, appealing again to the fact that $C$ is a clique,
this assignment must also satisfy all the {\sc Label Cover} constraints. 
\end{proof}

\section{Hardness of Detecting Communities \label{sec:Hardness-of-Detecting}}
\begin{thm}
\label{thm:decision}There exists an $\epsilon\left(n\right)=o\left(1\right)$
such that, assuming ETH, distinguishing between the following requires
time $n^{\tilde{\Omega}\left(\log n\right)}$:
\begin{description}
\item [{Completeness}] $G$ contains an $\left(1,\epsilon\right)$-community;
and
\item [{Soundness}] $G$ does not contain an $\left(\beta+\epsilon,\beta\right)$-community
for any $\beta\in\left[0,1\right]$.
\end{description}
\end{thm}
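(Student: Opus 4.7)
The plan is to lift the counting construction of Section~\ref{sec:Hardness-of-Counting} by making two changes foreshadowed in the overview. First, each proper vertex will carry an assignment to a $t$-tuple of rows/columns of $\mathcal{G}^2$, where $t \approx \log n$, rather than to a single row/column; concretely, a proper vertex is indexed by a $t$-element multiset $S \subset \mathcal{G}$ together with $2t$ degree-$(|\mathcal{F}|-1)$ polynomials giving an assignment to the corresponding $t$ rows and $t$ columns (plus a multiplicity index as in the counting construction). Proper-proper edges are drawn exactly when the induced assignments are consistent on all pairwise intersections \emph{and} satisfy every {\sc Label Cover} constraint induced on the cells covered by the two tuples. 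Second, the auxiliary vertices are now indexed by subsets $H \subset \mathcal{G}$ with $|H| = |\mathcal{G}|/2$, and $u_H$ is connected to proper vertex $v_S$ iff $S \subseteq H$. For random $S$ this happens with probability $2^{-t} = 1/\poly(n)$, which will be our target $\epsilon$.

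Completeness is essentially inherited from Section~\ref{sec:Hardness-of-Counting}. A globally satisfying assignment extends to an individual-degree-$(|\mathcal{F}|-1)$ polynomial $P$; the set $C$ of proper vertices whose row/column polynomials are the restrictions of $P$ is a clique because every pairwise constraint holds. A non-member proper vertex disagrees with $P$ on some line, hence on all but $|\mathcal{F}|-1$ values of $\mathcal{G}$, so it shares an edge with only an $O(|\mathcal{F}|/|\mathcal{G}|) = o(1)$ fraction of $C$. An auxiliary vertex $u_H$ is adjacent to those $v_S \in C$ with $S \subseteq H$, which by concentration is an $O(2^{-t})$ fraction of $C$. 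Tuning $|\mathcal{F}|/|\mathcal{G}|$ and $2^{-t}$ to a common $o(1)$ delivers a $(1,\epsilon)$-community.

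For soundness, let $C$ be any $(\beta+\epsilon,\beta)$-community. The first key step is a \emph{spreading lemma}: for every $R \subset \mathcal{G}$ with $|R| \leq |\mathcal{G}|/2$, the fraction of vertices $v_S \in C$ with $S \subseteq R$ is at most $\beta + o(1)$. Indeed, averaging over all $H \supset R$ with $|H|=|\mathcal{G}|/2$, if this fraction exceeded $\beta$ then some $u_H \notin C$ would be adjacent to more than a $\beta$-fraction of $C$, violating weak ties. The second step is a \emph{birthday-consistency lemma}: because of spreading, a uniformly random pair $v_S, v_{S'} \in C$ has $|S \cup S'| = 2t$ lines that are well-spread in $\mathcal{G}$, so in expectation their union covers $\Omega(t^2/|\mathcal{G}|)$ cells of $\mathcal{F}^2$ and in particular contains $\Omega(t^2 d_A/|\mathcal{F}|)$ {\sc Label Cover} edges (quantitatively via Lemma~\ref{lem:partition} applied to the line-incidence structure). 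Since the strong ties hypothesis forces $v_S$ to be adjacent to a $(\beta+\epsilon)$-fraction of $C$, a counting argument shows that for most $v_S \in C$ a $(1-o(1))$-fraction of the other $v_{S'} \in C$ give assignments consistent with $v_S$ on their intersection and satisfy all {\sc Label Cover} edges in $S \cup S'$. A standard plurality/decoding argument then extracts, from $C$, a single global labeling of the {\sc Label Cover} instance that satisfies a $1-o(1)$ fraction of constraints, contradicting Theorem~\ref{thm:label-cover} in the soundness case.

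The main obstacle will be step three of the soundness argument: we must simultaneously handle (i) the spreading bound, which only limits the mass on small $R$, (ii) a strong-ties threshold of $\beta+\epsilon$ which barely exceeds the trivial $\beta$, and (iii) the possibility that $C$ is dominated by assignments that are mutually consistent but not close to any single low-degree polynomial. Getting past (iii) requires using the low-degree structure of the row/column assignments together with the birthday overlap to run a global decoding; the delicate point is ensuring that the $\epsilon$ gap between the strong and weak sides is enough to rule out ``patchwork'' assignments, which forces a careful setting of the parameters $(t, |\mathcal{F}|, |\mathcal{G}|, |H|)$ so that $N = 2^{\tilde O(\sqrt n)}$ still yields an $N^{\tilde\Omega(\log N)}$ lower bound from the $2^{\Omega(n)}$ ETH bound on {\sc Label Cover}.
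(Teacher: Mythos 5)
Your high-level architecture (birthday repetition, low-degree line encoding, FGLSS graph, half-size auxiliary sets forcing spreading) matches the paper, and your completeness argument and spreading lemma are essentially right. But the core of the soundness argument -- the step from spreading to a contradiction -- is the part that does not work as you outline it, and it is exactly the ``patchwork'' difficulty you flag in your last paragraph. In a $\left(\beta+\epsilon,\beta\right)$-community $\beta$ may be arbitrarily small (even $0$), so strong ties only guarantees that each vertex is adjacent to a $\left(\beta+\epsilon\right)$-fraction of $C$, which can be $o(1)$; adjacency to such a tiny fraction in no way implies that ``a $(1-o(1))$-fraction of the other $v_{S'}\in C$ are consistent with $v_S$,'' and consequently no plurality/unique-decoding step can extract a single labeling of value $1-o(1)$ to contradict Theorem~\ref{thm:label-cover}. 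The paper never attempts such a global decoding. Instead it proves a \emph{list-decoding} bound (Claim~\ref{claim:list-decoding}): using the $u_H$-spreading plus an averaging argument over the $\geq 2/\epsilon$ hypothetical assignments to a single line, any $\left(\beta+\epsilon,\beta\right)$-community induces at most $4/\epsilon$ assignments per line. It then takes the {\sc Label Cover} soundness to be $\epsilon^{3}$, so that even the union over all $O(1/\epsilon^{2})$ pairs from the lists satisfies only an $O(\epsilon)$-fraction of constraints, and derives the contradiction \emph{with the community property itself}: some vertex covering a violation-dense block $X_i$ can only have edges to a $\beta$-fraction of $C$, below the $\beta+\epsilon$ threshold. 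Your sketch is missing both the list-decoding claim and this way of closing the argument.

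Two further gaps in the construction itself. First, you assert $u_H\notin C$ in the spreading lemma without justification; the counting paper's gadget (two identical copies with no edge between them) only excludes auxiliary vertices from $(1,\epsilon)$-communities. For $\left(\beta+\epsilon,\beta\right)$-communities the paper instead makes $\left|V\right|^{2}$ identical copies of each auxiliary vertex, so that if they were included they would form the vast majority of $C$ and violate strong ties (Claim~\ref{claim:auxiliary}). Second, by lifting the counting construction's identification of $A\cup B$ with points of ${\cal F}^{2}$ and keeping only the single family $u_H$, you lose the structure the final counting argument needs: the paper partitions $A$ and $B$ into $\rho$-sized blocks via Lemma~\ref{lem:partition} so that \emph{every} pair $\left(X_i,Y_j\right)$ carries roughly the same number of constraints, indexes columns by ${\cal F}_A$ and rows by ${\cal F}_B$, restricts to balanced supports $S$, and adds separate auxiliary families $u_{H_A},u_{H_B}$; these are what guarantee that a $\left(1-\beta\right)$-fraction of $C$ covers some violation-dense $X_i$, and that the induced violations hit all but an $O(\epsilon)$-fraction of the $Y_j$'s covered by the rest of $C$. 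A single $u_H$ family over ${\cal G}$ cannot force supports to hit the (tiny) sets ${\cal F}_A,{\cal F}_B$ at all. Finally, setting the community parameter $\epsilon=2^{-t}=1/\poly(n)$ conflicts with Moshkovitz--Raz, whose alphabet has size $2^{\poly(1/\epsilon)}$: $\epsilon$ must vanish slowly, and the $2^{-t}$ auxiliary-adjacency probability only needs to be below $\epsilon$, not equal to it.
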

The rest of this section is devoted to the proof of Theorem \ref{thm:decision}.
Our starting point is the {\sc Label Cover} of Moshkovitz-Raz (Theorem
\ref{thm:label-cover}). We compose the birthday repetition technique
of \cite{AIM14-birthday} with a bi-variate low-degree encoding. We
then encode this as a graph a-la FGLSS \cite{fglss96}. We add auxiliary
vertices to ensure that any $\left(\beta+\epsilon,\beta\right)$-community
corresponds, approximately, to a uniform distribution over the variables.

\subsubsection*{Construction}

Begin with a $\left(d_{A},d_{B}\right)$-bi-regular instance $\left(A,B,E,\pi\right)$
of {\sc Label Cover} of size $n=n_{A}+n_{B}$ where $n_{A}\triangleq\left|A\right|$
and $n_{B}\triangleq\left|B\right|$. Let $\rho\triangleq\sqrt{n}\log n$;
let ${\cal G}$ be a finite field of size $\rho/\epsilon^{3}=\tilde{O}\left(\rho\right)$,
and let ${\cal F}\subset{\cal G}$ be an arbitrary subset of size
$\left|{\cal F}\right|=2\rho$. Let ${\cal F}_{A},{\cal F}_{B}\subset{\cal F}$
be disjoint subsets of size $n_{A}/\rho$, $n_{B}/\rho$, respectively.
By Lemma \ref{lem:partition}, we can partition $A$ and $B$ into
subsets $X_{1},\dots,X_{\left|{\cal F}_{A}\right|}$ and $Y_{1},\dots,Y_{\left|{\cal F}_{B}\right|}$
of size at most $\left|{\cal F}\right|$ such that between every two
subsets there are approximately $\frac{d_{A}\rho^{2}}{n_{B}}=\frac{d_{B}\rho^{2}}{n_{A}}$
constraints. For $i\in{\cal F}_{A}$, we think of the points $\left\{ i\right\} \times{\cal F}\subset{\cal G}^{2}$
as representing assignments to variables in  $X_{i}$; for $j\in{\cal F}_{B}$,
we think of ${\cal F}\times\left\{ j\right\} \subset{\cal G}^{2}$
as representing assignments to variables in $Y_{j}$. Notice that
each point in ${\cal F}^{2}$ may represent an assignment to both
a vertex from $A$ and a vertex from $B$, to one of them, or to neither.
In particular, any assignment $P\colon{\cal G}^{2}\rightarrow{\cal G}$
induces an assignment for the {\sc Label Cover} instance; note that
since $\left|{\cal G}\right|>\left|\Sigma_{A}\right|\left|\Sigma_{B}\right|$,
one value $P\left(f_{1},f_{2}\right)\in{\cal G}$ suffices to describe
assignments to both $a\in A$ and $b\in B$.

Let $t\triangleq\log n\cdot\left(\frac{\left|{\cal G}\right|}{\left|{\cal F}_{A}\right|}+\frac{\left|{\cal G}\right|}{\left|{\cal F}_{B}\right|}\right)=\polylog\left(n\right)$.
We say that a subset $S\in{{\cal G} \choose t}$ is {\em balanced}
if: $\left|S\cap{\cal F}_{A}\right|=\frac{\left|{\cal F}_{A}\right|}{\left|{\cal G}\right|}\cdot t$
and $\left|S\cap{\cal F}_{B}\right|=\frac{\left|{\cal F}_{B}\right|}{\left|{\cal G}\right|}\cdot t$.
For every balanced subset $S$, consider $2t$ polynomials $q_{\ell}\colon{\cal G}\rightarrow{\cal G}$
of degree at most $\left|{\cal F}\right|-1$, representing an assignment\footnote{We will only consider polynomials that correspond to a consistent
assignment $Q$; i.e. for each point in $S\times S$ we expect the
two corresponding polynomials to agree with each other.} $Q\colon\left(S\times{\cal G}\right)\cup\left({\cal G}\times S\right)\rightarrow{\cal G}$.
For balanced $S$ and $2t$-tuple of polynomials $\left(q_{\ell}\right)$,
we construct a corresponding vertex $v_{S,\left(q_{\ell}\right)}$
in the communities graph. Let $V$ denote the set of vertices defined
so far. For $g\in{\cal G}$ we abuse notation and say that $g\in v_{S,\left(q_{\ell}\right)}$
if $g\in S$. We construct an edge in the communities graph between
two vertices in $V$ if their assignments agree on the variables in
their intersection, and their induced assignments to $A\cup B$ satisfy
all the {\sc Label Cover} constraints.

Additionally, for every $H\subset{\cal G}$ of size $\left|H\right|=\left|{\cal G}\right|/2$,
define $\left|V\right|^{2}$ identical auxiliary vertices $u_{H}$
in the communities graph. We draw an edge between auxiliary vertex
$u_{H}$ and vertex $v_{S,\left(q_{\ell}\right)}$ if $S\subset H$.
Similarly, for every $H_{A}\subset{\cal F}_{A}$ of size $\left|H_{A}\right|=\left|{\cal F}_{A}\right|/2$,
we define $\left|V\right|^{2}$ identical auxiliary vertices $u_{H_{A}}$
with edges to every vertex $v_{S,\left(q_{\ell}\right)}$ such that
$\left(S\cap{\cal F}_{A}\right)\subset H_{A}$. For $H_{B}\subset{\cal F}_{B}$
of size $\left|H_{B}\right|=\left|{\cal F}_{B}\right|/2$, we draw
edges between $u_{H_{B}}$ and $v_{S,\left(q_{\ell}\right)}$ such
that $\left(S\cap{\cal F}_{B}\right)\subset H_{B}$.

\subsubsection*{Completeness}

Suppose that the {\sc Label Cover} instance has a satisfying assignment.
Let ${\cal Z}\subseteq{\cal G}^{2}$ denote the subset of points that
correspond to at least one variable in $A$ or $B$. Let $P_{{\cal Z}}:{\cal Z}\rightarrow{\cal G}$
be the induced function on ${\cal Z}$ that corresponds to the satisfying
assignment, and let $P:{\cal G}^{2}\rightarrow{\cal G}$ be the extension
of $P_{{\cal Z}}$ by setting $P\left(f_{1},f_{2}\right)=0$ for $\left(f_{1},f_{2}\right)\in{\cal F}^{2}\setminus{\cal Z}$
(this choice is arbitrary), and then extending to an $\left(\left|{\cal F}\right|-1\right)$-individual-degree
polynomial over all of ${\cal G}^{2}$. 

Let $C$ be the set of vertices that correspond to restrictions of
$P$ to balanced sets, i.e.
\[
C=\left\{ v_{S,\left(P\mid_{S}\right)}:\,\mbox{\ensuremath{S}\,\,is balanced}\right\} ,
\]
where $P\mid_{S}$ denotes the restriction of $P$ to $\left(S\times{\cal G}\right)\cup\left({\cal G}\times S\right)$.
Since all those vertices correspond to a consistent satisfying assignment,
$C$ is a clique. 

For any vertex $v_{S,\left(q_{\ell}\right)}\notin C$, at least one
of the polynomials, $q_{\ell^{*}}$ disagrees with the restriction
of $P$ to the corresponding line. Since both $q_{\ell^{*}}$ and
the restriction of $P$ to that line are degree-$\left(\left|{\cal F}\right|-1\right)$
polynomials, they must disagree on at least $\left(1-\frac{\left|{\cal F}\right|}{\left|{\cal G}\right|}\right)$-fraction
of the coordinates. The probability that a random balanced set $S'$
is contained in the $O\left(\epsilon^{3}\right)$-fraction of coordinates
where they do agree is smaller than $\epsilon$ (and in fact polynomially
small in $n$). Therefore $v_{S,\left(q_{\ell}\right)}$ has inconsistency
violations with all but (less than) an $\epsilon$-fraction of the
vertices in $C$.

For any auxiliary vertex $u_{H_{A}}$, the probability that a random
vertex $v_{S,\left(P\mid_{S}\right)}\in C$ is connected to $u_{H_{A}}$
is $2^{-\left|S\cap{\cal F}_{A}\right|}<1/n$, and similarly for $u_{H_{B}}$
and $u_{H}$. Therefore, every auxiliary vertex is connected to less
than a $\left(1/n\right)$-fraction of the vertices in $C$.

\subsection{Soundness}
\begin{lem}
\label{lem:soundness}If the {\sc Label Cover} instance has value
at most $\epsilon^{3}$, then there are no $\left(\beta+\epsilon,\beta\right)$-communities.
\end{lem}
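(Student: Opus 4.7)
The plan is to derive a contradiction by converting a hypothetical $(\beta+\epsilon,\beta)$-community $C$ into a labeling of the {\sc Label Cover} instance whose value exceeds $\epsilon^3$. The first step is to show $C\subseteq V$, i.e., $C$ contains no auxiliary vertex. Each type of auxiliary comes in $|V|^2$ identical, pairwise non-adjacent copies: if some but not all copies of a given auxiliary lie in $C$, then an in-$C$ copy and an out-of-$C$ copy share an identical $C$-neighborhood, but the strong-ties lower bound $(\beta+\epsilon)|C|$ collides with the weak-ties upper bound $\beta|C|$ (impossible for $\epsilon>0$). If all $|V|^2$ copies lie in $C$, then $|C|\geq|V|^2$, while each auxiliary has at most $|V|$ proper-vertex neighbors, still failing strong ties. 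Hence $C\subseteq V$, and weak ties applied to every auxiliary yield, for each half-sized $H\subseteq\mathcal{G}$, $H_A\subseteq\mathcal{F}_A$, and $H_B\subseteq\mathcal{F}_B$,
\[
\Pr_{v\in C}[S(v)\subseteq H]\leq\beta,\quad\Pr_{v\in C}[S(v)\cap\mathcal{F}_A\subseteq H_A]\leq\beta,\quad\Pr_{v\in C}[S(v)\cap\mathcal{F}_B\subseteq H_B]\leq\beta.
\]

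The next step uses these ``no-half-concentration'' bounds to show that a $(1-o(1))$-fraction of ordered pairs $(v_1,v_2)\in C^2$ is \emph{well-spread}: the union $S(v_1)\cup S(v_2)$ touches a substantial number of {\sc Label Cover} edges, where the count comes from the partition lemma (Lemma~\ref{lem:partition}) applied to $S(v_1)\cap\mathcal{F}_A$ and $S(v_2)\cap\mathcal{F}_B$ (and symmetrically). I then define a labeling $\varphi\colon A\cup B\to\Sigma$ by a randomized vertex-sampling rule: for each variable, independently sample a uniform $v\in C$ opining on that variable and read off $v$'s induced value. Whenever the samples $v_1$ (opining on $a$) and $v_2$ (opining on $b$) are adjacent in the communities graph, the edge condition forces the values they assign to $a$ and $b$ to satisfy the constraint $\pi_{(a,b)}$. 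Combining the strong-ties edge density $\beta+\epsilon$ of $C$ with the well-spread pair structure, I would lower-bound the conditional adjacency probability and conclude that $\varphi$'s expected satisfying fraction exceeds $\epsilon^3$; some realization then witnesses a labeling of value $>\epsilon^3$, the desired contradiction.

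The main technical obstacle is the conditional adjacency step: the weak-ties bounds constrain single-vertex concentrations rather than the joint distribution of two independent samples conditioned on opining on a prescribed pair of variables, and the narrow $\epsilon$-gap between strong and weak ties is the only available lever. I expect the argument to require an iterative cleanup that replaces $C$ by a subcore where the $S$-marginals on $\mathcal{F}_A$ and $\mathcal{F}_B$ are comparably balanced, together with an averaging argument over random {\sc Label Cover} edges converting the global strong-ties density into a per-edge adjacency bound, all while tolerating the large alphabet $|\Sigma_A|=2^{\poly(1/\epsilon)}$. Choosing $\epsilon(n)=o(1)$ to tend to zero slowly enough (e.g.\ $\epsilon=1/\polylog(n)$) preserves all of these quantitative margins.
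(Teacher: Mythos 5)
Your first step (no auxiliary vertices, hence the ``no-half-concentration'' bounds $\Pr_{v\in C}[S(v)\subseteq H]\leq\beta$ etc.) is correct and matches the paper's Claim \ref{claim:auxiliary}. But the rest of the plan has a genuine gap exactly where you flag it, and the route you propose for closing it does not go through. The missing ingredient is a \emph{list-decoding} step (the paper's Claim \ref{claim:list-decoding}): using the fact that distinct degree-$\left(\left|{\cal F}\right|-1\right)$ line-polynomials disagree on all but an $O(\epsilon^{3})$-fraction of each line, together with the auxiliary weak-ties bounds, one shows that $C$ induces only $O(1/\epsilon)$ distinct assignments per line. Your proposal never uses the low-degree structure at all, and without a bound of this kind there is no way to convert ``value $\leq\epsilon^{3}$'' into a contradiction: the community could a priori carry a huge, mutually inconsistent family of assignments to each variable.

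Concretely, the ``conditional adjacency'' step cannot be rescued from the ingredients you have. The strong-ties condition only gives each $v\in C$ a neighbor set of measure $\beta+\epsilon$, and the auxiliary constraints only yield statements of the following form: for every $v$, the set of columns $j$ covered by \emph{at least one} neighbor of $v$ cannot fit inside a half-set of ${\cal F}_{B}$ (else the vertex $u_{H_{B}}$ with $H_{B}$ containing all covered columns would violate weak ties). This guarantees one covering neighbor per column for many columns, but says nothing about the probability that a \emph{uniformly random} $v_{2}$ covering a prescribed $Y_{j}$ is adjacent to $v_{1}$ -- that probability can be as small as $1/|C|$, so the expected satisfied fraction of your sampled labeling has no useful lower bound, let alone one exceeding $\epsilon^{3}$. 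Even granting a list-decoding bound, sampling a uniform vertex (rather than a uniform list element) can put negligible mass on the consistent value of a variable, so the expectation argument still fails. The paper avoids labeling extraction altogether: from the $O(1/\epsilon)$-size lists and value $\leq\epsilon^{3}$ it gets, via Markov and the regularity of the partition (Lemma \ref{lem:partition}), that a typical vertex $v_{S,(q_{\ell})}$ has \emph{no} edges to any vertex covering a $(1-O(\epsilon))$-fraction of the $Y_{j}$'s; the auxiliary bound then caps $v$'s neighborhood at a $\beta$-fraction of $C$, contradicting strong ties directly. You would need to import both the list-decoding claim and this contradiction-style finish (or a genuinely new substitute) for your outline to become a proof.
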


\paragraph{Auxiliary vertices}
\begin{claim}
Every $\left(\beta+\epsilon,\beta\right)$-community does not contain
any auxiliary vertices.\label{claim:auxiliary}\end{claim}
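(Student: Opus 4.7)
The plan is to exploit the $|V|^2$-fold redundancy of each auxiliary label. For any set $H\subset{\cal G}$ (and symmetrically for $H_A,H_B$), all $|V|^2$ copies of $u_H$ are mutually non-adjacent---the construction only draws edges from auxiliary vertices to $v$-vertices---and they share an identical neighborhood, namely the vertices $v_{S,(q_\ell)}$ with $S\subset H$. I argue by contradiction: suppose some copy of $u_H$ lies in a $(\beta+\epsilon,\beta)$-community $C$, and split on whether at least one other copy lies outside $C$.

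In the first, generic case, I pick any out-of-community copy $u_H^\dagger\notin C$. Because $u_H^\dagger$ and the in-community copy $u_H^\star\in C$ have identical neighborhoods, and because no auxiliary vertex is adjacent to any other auxiliary vertex, the set of friends of $u_H^\star$ inside $C$ is literally the same set as the friends of $u_H^\dagger$ inside $C$. But the $(\beta+\epsilon,\beta)$ conditions force $u_H^\star$ to have at least $(\beta+\epsilon)|C|$ friends in $C$ while simultaneously forcing $u_H^\dagger$ to have at most $\beta|C|$---impossible for $\epsilon>0$.

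The degenerate case, in which all $|V|^2$ copies of $u_H$ lie in $C$, is handled by a cardinality comparison: $|C|\geq|V|^2$, while $u_H^\star$'s friends in $C$ are contained in $V$ (the other auxiliary copies contribute no edges), giving at most $|V|$ friends. The strong-ties inequality then yields $(\beta+\epsilon)|V|^2\leq(\beta+\epsilon)|C|\leq|V|$, i.e., $|V|\leq1/\epsilon$, which fails because $|V|$ is super-polynomial in $1/\epsilon$. The same two-case dichotomy applies verbatim to the $u_{H_A}$ and $u_{H_B}$ auxiliary vertices.

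I do not foresee a real obstacle---the argument is essentially forced by the identical-neighborhoods construction. The one place to take care is the degenerate case: the inflated multiplicity $|V|^2$ (rather than, say, $|V|$) is exactly what is needed so that the claim goes through uniformly for all $\beta\in[0,1]$, and in particular at the $\beta=0$ endpoint where the generic argument has no gap to exploit.
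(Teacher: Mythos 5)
Your proof is correct and follows essentially the same route as the paper: the identical copies force an all-or-none dichotomy (your generic case is exactly the paper's implicit "identical, so all in or all out" step), and the all-in case is killed because auxiliary vertices have at most $\left|V\right|$ neighbors while the community then has size at least $\left|V\right|^{2}$, which is the paper's "vast majority of auxiliary vertices" argument made quantitative. Your explicit cardinality bound $\left(\beta+\epsilon\right)\left|V\right|^{2}\leq\left|V\right|$ is just a more detailed rendering of the same idea.
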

\begin{proof}
There are $\left|V\right|^{2}$ identical copies of each auxiliary
vertex. Since they are identical, any community must either contain
all of them, or none of them. If the community contains all $\left|V\right|^{2}$
copies, then it has a vast majority of auxiliary vertices, so none
of them can have edges to an $\epsilon$-fraction of the community.
\end{proof}

\paragraph{List decoding}
\begin{claim}
The vertices in any $\left(\beta+\epsilon,\beta\right)$-community
$C$ induce at most $4/\epsilon$ different assignments for each variable.\label{claim:list-decoding}\end{claim}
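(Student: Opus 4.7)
The plan is to argue by contradiction via pigeonhole on the strong-ties condition, thereby deriving a lower bound on the fraction of $C$ that has ``no opinion'' on $v$, and then to invoke an auxiliary vertex to violate weak ties. Fix a variable $v$ at coordinate $(f_1^*,f_2^*)\in{\cal F}^2$, and suppose $C$ induces $k>4/\epsilon$ distinct assignments $a_1,\dots,a_k$ on $v$. Partition
\[
C \;=\; C_0 \;\sqcup\; C_1 \;\sqcup\; \cdots \;\sqcup\; C_k,
\]
where $C_i$ (for $i\geq 1$) is the set of vertices of $C$ inducing $a_i$ on $v$, and $C_0$ is the set of vertices of $C$ whose support $S$ avoids both $f_1^*$ and $f_2^*$.

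The central structural observation is that any two vertices of $C$ that both touch $v$ and share an edge must induce the \emph{same} assignment to $v$: they carry univariate polynomials of degree less than $|{\cal F}|$ on a common line through $v$, and these polynomials must agree on the (size-${\geq}|{\cal F}|$) intersection of the two supports, hence they are equal. Consequently $C_i$ and $C_j$ are disconnected for distinct $i,j\geq 1$, and each $c\in C_i$ has all of its $C$-neighbors inside $C_i\cup C_0$. Strong ties then gives
\[
|C_i|+|C_0| \;\geq\; (\beta+\epsilon)|C|+1 \qquad\text{for every nonempty }C_i.
\]
Pigeonholing on the smallest nonempty class---of size at most $(|C|-|C_0|)/k<\epsilon|C|/4$---and substituting into this inequality yields $|C_0|>\bigl(\beta+\tfrac{3\epsilon}{4}\bigr)|C|$.

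The final step, which I expect to be the main obstacle, is to turn this into a weak-ties contradiction via an auxiliary vertex. By Claim~\ref{claim:auxiliary} no auxiliary vertex lies in $C$, so every copy of $u_H$, $u_{H_A}$, $u_{H_B}$ contributes at most $\beta|C|$ edges into $C$. Since every $c\in C_0$ satisfies $f_1^*,f_2^*\notin S_c$ and the supports are balanced ($|S\cap{\cal F}_A|=t|{\cal F}_A|/|{\cal G}|$ and $|S\cap{\cal F}_B|=t|{\cal F}_B|/|{\cal G}|$), I would produce (without loss of generality taking $f_1^*\in{\cal F}_A$) a specific half $H_A\subset{\cal F}_A\setminus\{f_1^*\}$ for which $|\{c\in C:S_c\cap{\cal F}_A\subset H_A\}|>\beta|C|$, violating weak ties at $u_{H_A}$. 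The subtlety is that the supports $S_c\cap{\cal F}_A$ could in principle be dispersed across ${\cal F}_A\setminus\{f_1^*\}$, so a uniform averaging over halves yields only $|C|\cdot 2^{-t|{\cal F}_A|/|{\cal G}|}$ overlap in expectation, which does not beat $\beta|C|$; the argument has to exploit the full $\Omega(\epsilon)$ gap between $|C_0|/|C|$ and $\beta$ together with the enforced balancedness of supports to locate a concentrating $H_A$.
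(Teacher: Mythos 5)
Your first two steps are sound: adjacency between two vertices that both cover the point $v=(f_1^*,f_2^*)$ does force equal values at $v$ (though for the simpler reason that $v$ itself lies in both assignment domains -- the two vertices need not share a common line through $v$, so your ``common line'' justification is off, but the conclusion stands), hence the classes $C_1,\dots,C_k$ are pairwise disconnected, and the strong-ties pigeonhole correctly yields $|C_0|>(\beta+\tfrac{3\epsilon}{4})|C|$. The genuine gap is the final step, and it is not a subtlety that can be patched: the statement ``more than a $(\beta+\tfrac{3\epsilon}{4})$-fraction of $C$ avoids the two fixed elements $f_1^*,f_2^*$'' is simply not contradictory, and no auxiliary vertex can punish it. The auxiliary vertices only detect communities whose supports are \emph{confined to a fixed half} of ${\cal G}$ (or of ${\cal F}_A$, ${\cal F}_B$); avoiding one or two prescribed elements imposes no such confinement, and indeed in the completeness community essentially every vertex avoids any fixed point while every auxiliary vertex sees less than a $1/n$-fraction of it. So there is no ``concentrating $H_A$'' to be located, with or without balancedness, and your reduction to $|C_0|$ being large has discarded exactly the leverage (the multiplicity of conflicting assignments) that the contradiction must come from.

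The paper's proof keeps that leverage by working at the level of a whole line rather than a single point. If $v$ receives more than $4/\epsilon$ values, then some line through $v$, say $\{g_1\}\times{\cal G}$, carries at least $2/\epsilon$ distinct degree-$(|{\cal F}|-1)$ polynomials from vertices of $C$; pairwise these agree on at most $|{\cal F}|$ points each, so outside a set of at most $2|{\cal F}|/\epsilon^2$ points all these assignments are pairwise distinct, giving a disagreement set $R\subseteq{\cal G}$ with $|R|\geq|{\cal G}|/2$. \emph{Largeness} of $R$ is what the auxiliary vertices can see: taking $H\supseteq{\cal G}\setminus R$ of size $|{\cal G}|/2$, weak ties at $u_H$ forces all but a $\beta$-fraction of $C$ to have $S\cap R\neq\emptyset$. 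Each such vertex fixes a value at some $(g_1,g')$ with $g'\in R$ and hence can be adjacent to vertices of at most one of the $2/\epsilon$ polynomial classes; averaging over class representatives, some representative has degree at most $(\beta+\epsilon/2)|C|$ in $C$, contradicting strong ties. Note the contradiction lands on strong ties of a proper vertex, not on weak ties of an auxiliary vertex as your plan intends; to repair your argument you would have to replace the point-level partition by this line-level disagreement structure.
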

\begin{proof}
Suppose by contradiction that this is not the case. Then, wlog, there
is a line $\left\{ g_{1}\right\} \times{\cal G}$ that receives at
least $2/\epsilon$ different assignments from vertices in $C$. Every
two assignments agree on at most $\left|{\cal F}\right|$ points $\left(g_{1},g'\right)$
on the line, so in total there are at most $2\left|{\cal F}\right|/\epsilon^{2}$
points where at least two assignments agree. Let $R\subseteq{\cal G}$
denote the set of $g'$ such that no two assignments agree on $\left(g_{1},g'\right)$;
we have that $\left|R\right|\geq\left|{\cal G}\right|-2\left|{\cal F}\right|/\epsilon^{2}\geq\left|{\cal G}\right|/2$.
Therefore, by the weak ties property, for at most a $\beta$-fraction
of the vertices $v_{S,\left(q_{\ell}\right)}\in C$, $S\cap R=\emptyset$. 

Consider the remaining $\left(1-\beta\right)$-fraction of vertices
in $C$. Suppose that $v$ assigns a value to some $\left(g_{1},g'\right)$
for $g'\in R$: this value can only agree with one of the $2/\epsilon$
different assignments to $\left(g_{1},g'\right)$. Therefore, in expectation,
each of the $2/\epsilon$ vertices that assign different values for
$\left(g_{1},g'\right)$ is connected to at most a $\left(\beta+\epsilon/2\right)$-fraction
of the vertices in $C$. This is a contradiction to $C$ being a $\left(\beta+\epsilon,\beta\right)$-community.
\end{proof}

\paragraph{Completing the proof}
\begin{proof}
[Proof of Lemma \ref{lem:soundness}] Suppose that at most a $\epsilon^{3}$-fraction
of the {\sc Label Cover} constraints can be satisfied by any single
assignment, and assume by contradiction that $C$ is a $\left(\beta+\epsilon,\beta\right)$-community.
By Claim \ref{claim:list-decoding}, $C$ induces at most $4/\epsilon$
assignments on each variable, so at most $O\left(\epsilon\right)$-fraction
of the constraints are satisfied by any pair of assignments. 

By Markov's inequality, for at least half of the subsets $X_{i}\subset A$,
only an $O\left(\epsilon\right)$-fraction of the constraints that
depend on $X_{i}$ are satisfied. By Claim \ref{claim:auxiliary}
at least $\left(1-\beta\right)$-fraction of the vertices in $C$
assign values to at least one such $X_{i}$. Consider any such vertex
$v_{S,\left(q_{\ell}\right)}$ where $S\ni i$. By construction of
the partitions (Lemma \ref{lem:partition}), each $X_{i}$ shares
approximately the same number of constraints with each $Y_{j}$. Therefore,
for all but an $O\left(\epsilon\right)$-fraction of $Y_{j}$'s, $X_{i}$
and $Y_{j}$ observe a violation - for all the assignments given by
vertices in $C$ to the variables in $Y_{j}$. In other words, $v_{S,\left(q_{\ell}\right)}$
cannot have edges to any vertex $v_{T,\left(r_{\ell}\right)}$ such
that $T\ni j$, for a $\left(1-O\left(\epsilon\right)\right)$-fraction
of $j\in\left[n_{B}/k_{B}\right]$. Finally, applying Claim \ref{claim:auxiliary}
again, at most a $\beta$ fraction of vertices in $C$ do not contain
any of those $j$'s. This is a contradiction to $v_{S,\left(q_{\ell}\right)}$
having edges to $\left(\beta+\epsilon\right)$-fraction of the vertices
in $C$.
\end{proof}
\bibliographystyle{alpha}
\bibliography{communities}

\newcommand{\etalchar}[1]{$^{#1}$}
\begin{thebibliography}{BKRW17}

\bibitem[ABI86]{ABI86-k-wise-construction}
Noga Alon, L{\'{a}}szl{\'{o}} Babai, and Alon Itai.
\newblock A fast and simple randomized parallel algorithm for the maximal
  independent set problem.
\newblock {\em J. Algorithms}, 7(4):567--583, 1986.

\bibitem[AGSS12]{AroraGSS12-communities-QPTAS}
Sanjeev Arora, Rong Ge, Sushant Sachdeva, and Grant Schoenebeck.
\newblock Finding overlapping communities in social networks: toward a rigorous
  approach.
\newblock In {\em {ACM} Conference on Electronic Commerce, {EC} '12, Valencia,
  Spain, June 4-8, 2012}, pages 37--54, 2012.

\bibitem[AIM14]{AIM14-birthday}
Scott Aaronson, Russell Impagliazzo, and Dana Moshkovitz.
\newblock {AM with multiple merlins}.
\newblock In {\em Computational Complexity (CCC), 2014 IEEE 29th Conference
  on}, pages 44--55. IEEE, 2014.

\bibitem[BA08]{BA08-clustering_quality}
Shai Ben{-}David and Margareta Ackerman.
\newblock Measures of clustering quality: {A} working set of axioms for
  clustering.
\newblock In {\em Advances in Neural Information Processing Systems 21,
  Proceedings of the Twenty-Second Annual Conference on Neural Information
  Processing Systems, Vancouver, British Columbia, Canada, December 8-11,
  2008}, pages 121--128, 2008.

\bibitem[Bar15]{Barman15-QPTAS}
Siddharth Barman.
\newblock Approximating nash equilibria and dense bipartite subgraphs via an
  approximate version of caratheodory's theorem.
\newblock In {\em Proceedings of the Forty-Seventh Annual {ACM} on Symposium on
  Theory of Computing, {STOC} 2015, Portland, OR, USA, June 14-17, 2015}, pages
  361--369, 2015.

\bibitem[BBB{\etalchar{+}}13]{BBBCT13-communities}
Maria{-}Florina Balcan, Christian Borgs, Mark Braverman, Jennifer~T. Chayes,
  and Shang{-}Hua Teng.
\newblock Finding endogenously formed communities.
\newblock In {\em Proceedings of the Twenty-Fourth Annual {ACM-SIAM} Symposium
  on Discrete Algorithms, {SODA} 2013, New Orleans, Louisiana, USA, January
  6-8, 2013}, pages 767--783, 2013.

\bibitem[BCKS16]{BCKS15-signaling}
Umang Bhaskar, Yu~Cheng, Young~Kun Ko, and Chaitanya Swamy.
\newblock Hardness results for signaling in bayesian zero-sum and network
  routing games.
\newblock In {\em Proceedings of the 2016 {ACM} Conference on Economics and
  Computation, {EC} '16, Maastricht, The Netherlands, July 24-28, 2016}, pages
  479--496, 2016.

\bibitem[BCMT16]{BCMT16-community-coNP-complete}
Christian Borgs, Jennifer~T. Chayes, Adrian Marple, and Shang{-}Hua Teng.
\newblock An axiomatic approach to community detection.
\newblock In {\em Proceedings of the 2016 {ACM} Conference on Innovations in
  Theoretical Computer Science, Cambridge, MA, USA, January 14-16, 2016}, pages
  135--146, 2016.

\bibitem[BKRW17]{BKRW15-DkS}
Mark Braverman, Young Kun{-}Ko, Aviad Rubinstein, and Omri Weinstein.
\newblock {ETH} hardness for densest-{\textdollar}k{\textdollar}-subgraph with
  perfect completeness.
\newblock In {\em SODA}, 2017.
\newblock To appear.

\bibitem[BKW15]{BKW15-best_nash}
Mark Braverman, Young Kun{-}Ko, and Omri Weinstein.
\newblock Approximating the best nash equilibrium in
  \emph{n\({}^{\mbox{o}}\)}\({}^{\mbox{(log \emph{n})}}\)-time breaks the
  exponential time hypothesis.
\newblock In {\em Proceedings of the Twenty-Sixth Annual {ACM-SIAM} Symposium
  on Discrete Algorithms, {SODA} 2015, San Diego, CA, USA, January 4-6, 2015},
  pages 970--982, 2015.

\bibitem[BMNN16]{BMNN16-block-model_info}
Jess Banks, Cristopher Moore, Joe Neeman, and Praneeth Netrapalli.
\newblock Information-theoretic thresholds for community detection in sparse
  networks.
\newblock In {\em COLT 2016}, 2016.

\bibitem[BPR16]{BPR15-PCP-PPAD}
Yakov Babichenko, Christos~H. Papadimitriou, and Aviad Rubinstein.
\newblock Can almost everybody be almost happy?
\newblock In {\em Proceedings of the 2016 {ACM} Conference on Innovations in
  Theoretical Computer Science, Cambridge, MA, USA, January 14-16, 2016}, pages
  1--9, 2016.

\bibitem[CCD{\etalchar{+}}15]{CCDEHT15-quasipoly_signaling}
Yu~Cheng, Ho~Yee Cheung, Shaddin Dughmi, Ehsan Emamjomeh{-}Zadeh, Li~Han, and
  Shang{-}Hua Teng.
\newblock Mixture selection, mechanism design, and signaling.
\newblock In {\em {IEEE} 56th Annual Symposium on Foundations of Computer
  Science, {FOCS} 2015, Berkeley, CA, USA, 17-20 October, 2015}, pages
  1426--1445, 2015.

\bibitem[CGI{\etalchar{+}}16]{CGIMPS16-NSETH}
Marco~L. Carmosino, Jiawei Gao, Russell Impagliazzo, Ivan Mihajlin, Ramamohan
  Paturi, and Stefan Schneider.
\newblock Nondeterministic extensions of the strong exponential time hypothesis
  and consequences for non-reducibility.
\newblock In {\em Proceedings of the 2016 {ACM} Conference on Innovations in
  Theoretical Computer Science, Cambridge, MA, USA, January 14-16, 2016}, pages
  261--270, 2016.

\bibitem[CIP09]{CIP09-SETH2}
Chris Calabro, Russell Impagliazzo, and Ramamohan Paturi.
\newblock The complexity of satisfiability of small depth circuits.
\newblock In {\em Parameterized and Exact Computation, 4th International
  Workshop, {IWPEC} 2009, Copenhagen, Denmark, September 10-11, 2009, Revised
  Selected Papers}, pages 75--85, 2009.

\bibitem[CKST16]{CKST16-local_communities}
Yuxin Chen, Govinda~M. Kamath, Changho Suh, and David N.~C. Tse.
\newblock Community recovery in graphs with locality.
\newblock In {\em ICML 2016}, 2016.

\bibitem[DFS16]{DFS16-other_objectives}
Argyrios Deligkas, John Fearnley, and Rahul Savani.
\newblock Inapproximability results for approximate nash equilibria.
\newblock {\em CoRR}, abs/1608.03574, 2016.

\bibitem[FGL{\etalchar{+}}96]{fglss96}
Uriel Feige, Shafi Goldwasser, Laszlo Lov{\'a}sz, Shmuel Safra, and Mario
  Szegedy.
\newblock Interactive proofs and the hardness of approximating cliques.
\newblock {\em Journal of the ACM (JACM)}, 43(2):268--292, 1996.

\bibitem[For10]{Fortunato10-survey}
Santo Fortunato.
\newblock Community detection in graphs.
\newblock {\em Physics Reports}, 486:75--174, 2010.

\bibitem[FP16]{FP16-block-model_bipartite}
Laura Florescu and Will Perkins.
\newblock Spectral thresholds in the bipartite stochastic block model.
\newblock In {\em COLT 2016}, 2016.

\bibitem[HLL83]{HLL83-stochastic_blockmodel}
Paul~W. Holland, Kathryn~Blackmond Laskey, and Samuel Leinhardt.
\newblock Stochastic blockmodels: First steps.
\newblock {\em Social Networks}, 5:109--137, 1983.

\bibitem[IP01]{IP01-SETH}
Russell Impagliazzo and Ramamohan Paturi.
\newblock On the complexity of k-sat.
\newblock {\em J. Comput. Syst. Sci.}, 62(2):367--375, 2001.

\bibitem[IPZ01]{IPZ01-ETH}
Russell Impagliazzo, Ramamohan Paturi, and Francis Zane.
\newblock Which problems have strongly exponential complexity?
\newblock {\em J. Comput. Syst. Sci.}, 63(4):512--530, 2001.

\bibitem[Kle02]{Kleinberg02-clustering_Arrow}
Jon~M. Kleinberg.
\newblock An impossibility theorem for clustering.
\newblock In {\em Advances in Neural Information Processing Systems 15 [Neural
  Information Processing Systems, {NIPS} 2002, December 9-14, 2002, Vancouver,
  British Columbia, Canada]}, pages 446--453, 2002.

\bibitem[LMM03]{LMM03_quasi_poly}
Richard~J. Lipton, Evangelos Markakis, and Aranyak Mehta.
\newblock Playing large games using simple strategies.
\newblock In {\em EC}, pages 36--41, 2003.

\bibitem[Man16]{Man16-DkS}
Pasin Manurangsi.
\newblock {Almost-Polynomial Ratio ETH-Hardness of Approximating
  \textsc{Densest $k$-Subgraph} with Perfect Completeness}.
\newblock {\em CoRR}, abs/1611.05991, 2016.

\bibitem[MMV16]{MMV16-block-model_semi}
Konstantin Makarychev, Yury Makarychev, and Aravindan Vijayaraghavan.
\newblock Learning communities in the presence of errors.
\newblock In {\em COLT 2016}, 2016.

\bibitem[MPW16]{MPW16-block-model_semi}
Ankur Moitra, William Perry, and Alexander~S. Wein.
\newblock How robust are reconstruction thresholds for community detection?
\newblock In {\em Proceedings of the 48th Annual {ACM} {SIGACT} Symposium on
  Theory of Computing, {STOC} 2016, Cambridge, MA, USA, June 18-21, 2016},
  pages 828--841, 2016.

\bibitem[MR10]{MR10-2query-PCP}
Dana Moshkovitz and Ran Raz.
\newblock Two-query {PCP} with subconstant error.
\newblock {\em J. {ACM}}, 57(5), 2010.

\bibitem[MSST07]{MishraSST07-a_b-cluster}
Nina Mishra, Robert Schreiber, Isabelle Stanton, and Robert~Endre Tarjan.
\newblock Clustering social networks.
\newblock In {\em Algorithms and Models for the Web-Graph, 5th International
  Workshop, {WAW} 2007, San Diego, CA, USA, December 11-12, 2007, Proceedings},
  pages 56--67, 2007.

\bibitem[MX16]{MX16-block-model}
Elchanan Mossel and Jiaming Xu.
\newblock Density evolution in the degree-correlated stochastic block model.
\newblock In {\em COLT 2016}, 2016.

\bibitem[Rub15]{Rub15-signaling}
Aviad Rubinstein.
\newblock Eth-hardness for signaling in symmetric zero-sum games.
\newblock {\em CoRR}, abs/1510.04991, 2015.

\bibitem[Rub16]{Rub16-Nash}
Aviad Rubinstein.
\newblock Settling the complexity of computing approximate two-player nash
  equilibria.
\newblock In {\em To appear in FOCS}, 2016.

\bibitem[SSS95]{SSS95-chernoff-k-wise}
Jeanette~P. Schmidt, Alan Siegel, and Aravind Srinivasan.
\newblock Chernoff-hoeffding bounds for applications with limited independence.
\newblock {\em {SIAM} J. Discrete Math.}, 8(2):223--250, 1995.

\bibitem[TPGV16]{TPGV16-clustering}
Nicolas Tremblay, Gilles Puy, Remi Gribonval, and Pierre Vandergheynst.
\newblock Compressive spectral clustering.
\newblock In {\em ICML 2016}, 2016.

\bibitem[vLM14]{LM14-clustering_axioms}
Twan van Laarhoven and Elena Marchiori.
\newblock Axioms for graph clustering quality functions.
\newblock {\em Journal of Machine Learning Research}, 15(1):193--215, 2014.

\end{thebibliography}

\end{document}